\documentclass[a4paper,11pt]{article}
\usepackage{amsmath}
\usepackage{amssymb}
\usepackage{graphicx}
\usepackage{parskip}
\usepackage{amsthm}
\usepackage{fancyhdr}
\usepackage{dsfont}
\usepackage{algorithm}
\usepackage{algorithmic}
\usepackage{hyperref}
\usepackage{bm}
\usepackage{natbib}

\setlength{\voffset}{0mm}
\setlength{\topmargin}{0mm}
\setlength{\headheight}{0mm}
\setlength{\headsep}{0mm}
\setlength{\hoffset}{0mm}
\setlength{\oddsidemargin}{0mm}
\setlength{\textwidth}{160mm}
\setlength{\textheight}{250mm}
\setlength{\footskip}{7mm}
\setlength{\marginparwidth}{0mm}

\newtheorem{thm}{Theorem}

\newtheorem{lem}{Lemma}

\theoremstyle{definition}
\newtheorem{defn}{Definition}

\newtheorem{rmk}{Remark}
\newtheorem{ex}{Example}

\newcommand{\M}{ \mathcal{ M } }
\newcommand{\D}{ \mathcal{ D } }

\newcommand{\C}{ \mathcal{ C } }
\newcommand{\Sim}{ \mathcal{ S } }
\newcommand{\Pa}{ \mathcal{ P } }
\newcommand{\G}{ \mathcal{ G } }
\newcommand{\B}{ \mathcal{ B } }
\newcommand{\N}{ \mathbb{ N } }
\newcommand{\E}{ \mathbb{ E } }
\newcommand{\R}{ \mathbb{ R } }
\newcommand{\Prb}{ \mathbb{ P } }
\newcommand{\n}{ \mathbf{ n } }
\newcommand{\m}{ \mathbf{ m } }
\newcommand{\x}{ \mathbf{ x } }
\newcommand{\y}{ \mathbf{ y } }
\newcommand{\X}{ \mathbf{ X } }
\newcommand{\e}{ \mathbf{ e } }

\newcommand{\KL}{ \operatorname{KL} }

\newcommand{\g}{ \mathbf{ g } }
\newcommand{\s}{ \mathbf{ s } }
\newcommand{\ext}{ \operatorname{ext} }
\newcommand{\bxi}{ \bm{ \xi } }

\title{Bayesian non-parametric inference for $\Lambda$-coalescents: posterior consistency and a parametric method}

\author{Jere Koskela \\
	\texttt{koskela@math.tu-berlin.de}\\
	\small Institut f\"ur Mathematik \\
	\small Technische Universit\"at Berlin \\
	\small Stra{\ss}e des 17.~Juni 136, 10623 Berlin \\
	\small Germany
	\and
	Paul A. Jenkins \\
	\texttt{p.jenkins@warwick.ac.uk}\\
	\small Department of Statistics \\
	\small University of Warwick \\
	\small Coventry CV4 7AL \\
	\small UK
	\and
	Dario Span\`{o} \\
	\texttt{d.spano@warwick.ac.uk}\\
	\small Department of Statistics \\
	\small University of Warwick \\
	\small Coventry CV4 7AL \\
	\small UK
}
\date{\today}

\begin{document}

\maketitle

\begin{abstract}
We investigate Bayesian non-parametric inference of the $\Lambda$-measure of $\Lambda$-coalescent processes with recurrent mutation, parametrised by probability measures on the unit interval. 
We give verifiable criteria on the prior for posterior consistency when observations form a time series, and prove that any non-trivial prior is inconsistent when all observations are contemporaneous.
We then show that the likelihood  given a data set of size $n \in \N$ is constant across $\Lambda$-measures whose leading $n - 2$ moments agree, and focus on inferring truncated sequences of moments.
We provide a large class of functionals which can be extremised using finite computation given a credible region of posterior truncated moment sequences, and a pseudo-marginal Metropolis-Hastings algorithm for sampling the posterior.
Finally, we compare the efficiency of the exact and noisy pseudo-marginal algorithms with and without delayed acceptance acceleration using a simulation study.
\end{abstract}

\section{Introduction}\label{introduction}

The $\Lambda$-coalescent family is a class of coalescent processes parametrised by probability measures on the unit interval, $\Lambda \in \M_1( [ 0, 1 ] )$, introduced by \citet{Donnelly99}, \citet{Pitman99} and \citet{Sagitov99}.
We focus in particular on the family of $\Lambda$-coalescents with recurrent, finite sites, finite alleles mutation, which we refer to simply as ``$\Lambda$-coalescents" throughout this paper.
This class of processes will be introduced formally in Section \ref{preliminaries}.

In recent years, $\Lambda$-coalescents have gained prominence as population genetic models for species with a highly skewed family size distribution, particularly among marine species \citep{Boom94, Arnason04, Eldon06, Birkner08}, and have also been suggested as models of evolution under natural selection \citep{Neher13}.
The $\Lambda$-measure models skewness of the family size distribution.
Thus, $\Lambda$ represents an important confounding factor for inference from genetic data, as well as being a quantity of interest in its own right.
Failure to properly account for uncertainty in $\Lambda$ could lead to model misspecification and incorrect inference.
Consequently, likelihood-based inference for $\Lambda$-coalescents has also been an active area of research \citep{Birkner08, Birkner11, Koskela15}.
A review of $\Lambda$-coalescents and their use in population genetic inference can be found in \citep{Birkner09b}, and \citet{Steinrucken13a} present a review of $\operatorname{Beta}$-coalescent models for marine species.
In this paper we make the following contributions:
\begin{enumerate}
\item We provide the first non-parametric analysis of inferring $\Lambda$-measures from genetic data. Our method is also the first instance of inferring $\Lambda$ using the Bayesian paradigm.
\item We prove inconsistency of the posterior in full generality when data is contemporaneous, and give verifiable criteria for consistency when the data set forms a time series.
\item We present an implementable parametrisation of the non-parametric inference problem by quotienting the infinite dimensional space $\M_1( [ 0, 1 ] )$ in a suitable, data-driven way. We believe this quotienting approach to have utility in infinite dimensional inference beyond the context of this work.
\item We implement a pseudo-marginal MCMC algorithm for sampling the posterior, and provide an illustrative simulation study which demonstrates the feasibility of the algorithm for inference.
\end{enumerate}

The usual approach to inference is to focus on a parametric family of $\Lambda$-measures and infer the parameters from observations.
Common choices of parametric family are $\Lambda( dr ) = 2 ( 2 + \psi^2 )^{-1} \delta_0( dr ) + \psi^2 ( 2 + \psi^2 )^{-1} \delta_{ \psi }( dr )$ where $\psi \in (0, 1]$ \citep{Eldon06}, $\Lambda = \operatorname{Beta}(2-\alpha, \alpha)$ where $\alpha \in (1,2)$ \citep{Birkner08, Birkner11} and $\Lambda( dr ) = c \delta_0( dr ) + 2^{-1} ( 1 - c ) r dr$ where $c \in [ 0, 1 ]$ \citep{Durrett05}.
We adopt the Bayesian non-parametric approach to circumvent restrictive, finite-dimensional parametrisations.
We show in Theorem \ref{consistency} that the posterior is inconsistent in the typical setting of sampling from a stationary population at a fixed time under any non-trivial prior (i.e.~one which does not assign full mass to the truth), including parametric families.
In Theorem \ref{ts_consistency} we adapt a result of \citep{Koskela17} to provide verifiable criteria on the prior for posterior consistency when time series data is available.
We also show in Section \ref{priors} that the popular Dirichlet process mixture model prior \citep{Lo84} satisfies these conditions.
Recent advances in sequencing technology have made genetic time series data available \citep{Drummond02, Beaumont03, Anderson05, Drummond05, Bollback08, Minin08, Malaspinas12, Mathieson13}, and our results provide strong motivation for its continued use and development.

In Section \ref{parameters} we make use of the fact that a sample of size $n \in \N$ carries information about $\Lambda \in \M_1( [ 0, 1 ] )$ only via its first $n - 2$ moments to make progress towards an implementable, non-parametric algorithm.
This fact is implicit in well known sampling recursions for $\Lambda$-coalescent processes \citep{Mohle06, Birkner08, Birkner09b} and made explicit in Lemma \ref{unidentifiability}.
It has two important consequences.
Firstly, it is natural to parametrise the inference problem with truncated moment sequences because any variation in the posterior between two $\Lambda$-measures whose first $n - 2$ moments coincide is due solely to the prior.
Thus we obtain parametric inference algorithms requiring no discretisation or truncation, which are nevertheless as general as any non-parametric method.
Secondly, while the conditions imposed on the prior by Theorem \ref{ts_consistency} are restrictive when viewed in $\M_1( [ 0, 1 ] )$ (e.g.~they rule out all of the parametric families listed above), they are sufficiently mild that priors whose support contains an arbitrarily good approximation of the truncated moment sequence of any desired $\Lambda \in \M_1( [ 0, 1 ] )$ can be constructed.
Posterior consistency of finite moment sequences is readily inherited from posterior consistency of the $\Lambda$-measure.

The truncated moment sequence approach can be thought of as regularising an underdetermined inference problem by identifying an appropriate, data-driven quotient space of parameters.
This is reminiscent of the method of \emph{likelihood-informed subspaces} \citep{Cui14}, which is a recently developed tool for efficient MCMC for inverse problems.
The difference between existing subspace approaches and our quotient space is that existing work has focused on projections onto finite dimensional subspaces which preserve ``most" of the likelihood in some sense, while our method captures the likelihood function exactly.
Hence we believe our work will have wider utility beyond the $\Lambda$-coalescent setting.

We show in Theorem \ref{moment_class_thm} that identifying $n - 2$ moments of $\Lambda$ does not correspond to identifying smaller regions in $\M_1( [ 0, 1 ] )$ with increasing $n$, when measured in either total variation or Kullback-Leibler divergence.
Hence the straightforward approach of computing a maximum a posteriori moment sequence, identifying a candidate $\Lambda^* \in \M_1( [ 0, 1 ] )$ with that moment sequence and treating $\Lambda^*$ as a point estimator is inappropriate.
Instead we use results by \citet{Winkler88} to provide a broad class of functionals of $\Lambda$ which can be maximised or minimised over credible regions of the posterior.
As a simple example, it is often of great interest whether the Kingman coalescent \citep{Kingman82}, $\Lambda = \delta_0$, provides an adequate model for observed genetic data.
Any parametric family containing $\delta_0$ as a special case could be used to assess the Kingman assumption, but such an approach requires justification of the parametric family.
It is also possible that $\delta_0$ is a good model within the family, but a poor fit in some broader one.
Our method avoids both of these problems.

Finally, we provide a pseudo-marginal Metropolis-Hastings algorithm \citep{Beaumont03, Andrieu09} for sampling truncated moment sequences from the posterior distribution.
We compare the performance of the standard pseudo-marginal algorithm, the noisy pseudo-marginal algorithm, as well as delayed acceptance accelerated versions of both algorithms \citep{Christen05} in order to reduce the number of expensive likelihood estimations and improve the computational feasibility of our inference.
We find that the noisy algorithm does not improve upon standard pseudo-marginal inference in this setting, especially given that twice as many likelihood evaluations are required.
In contrast, an off-the-shelf delayed acceptance step dramatically speeds up computations.
These results are illustrated with a simulation study comparing the Kingman hypothesis, $\Lambda = \delta_0$, based on simulated data from both the Kingman and Bolthausen-Sznitman ($\Lambda = U(0, 1)$) coalescents.
The Kingman coalescent is the classical null model of neutral evolution, while the Bolthausen-Sznitman coalescent has recently emerged as an alternative model in the presence of selection \citep{Schweinsberg15} or population expansion into uninhabited territory \citep{Berestycki13b}.
In particular, the Bolthausen-Sznitman coalescent has been suggested as a model for the genetic ancestry of microbial populations such as influenza or HIV \citep{Neher13}.

The rest of the paper is laid out as follows.
Section \ref{preliminaries} provides an introduction to $\Lambda$-coalescents, $\Lambda$-Fleming-Viot jump-diffusions and a duality relation connecting the two.
In Section \ref{inconsistency} we state and prove our consistency results.
Section \ref{parameters} presents our parametrisation via moments and shows that it preserves all information in the data.
Section \ref{priors} contains example families of priors which satisfy both consistency and tractable push-forward priors on moment sequences.
Section \ref{positive_results} contains our results on inferring functionals of $\Lambda$ based on finitely many moments.
In Section \ref{simulation} we present the pseudo-marginal Metropolis-Hastings algorithm for sampling posterior distributions of moment sequences, and an accompanying simulation study which demonstrates the practicality of the method.
Section \ref{discussion} concludes with a discussion.

\section{Preliminaries}\label{preliminaries}

Let $[ d ] := \{ 1, \ldots, d \}$ represent $d$ genetic types (e.g.~$d = 4^l$ for $l$ loci of DNA), $M = ( M_{ i j } )_{ i, j = 1 }^d$ be a stochastic matrix specifying mutation probabilities between types, $\theta > 0$ be the mutation rate and $\Lambda \in \M_1( [ 0, 1 ] )$ denote the $\Lambda$-measure.
Here and throughout we assume $M$ has a unique stationary distribution $m \in \M_1( [ d ] )$.

Let $\X := \{ \X_t \}_{ t \geq 0 }$ denote the $\Lambda$-Fleming-Viot process with recurrent, finite sites, finite alleles mutation: a jump-diffusion on  the $d$-dimensional probability simplex $\Sim_d := \{ \x \in [ 0, 1 ]^d : \sum_{ i = 1 }^d x_i = 1 \}$ with generator
\begin{align}
\G^{ \Lambda } f( \x ) = & \frac{ \Lambda( \{ 0 \} ) }{ 2 } \sum_{ i, j = 1 }^d x_i ( \delta_{ i j } - x_j ) \frac{ \partial^2 }{ \partial x_i \partial x_j } f( \x ) +  \theta \sum_{ i, j = 1 }^d x_j ( M_{ j i } - \delta_{ i j } ) \frac{ \partial }{ \partial x_i } f( \x ) \nonumber\\
& + \sum_{ i = 1 }^d x_i \int_{ ( 0, 1 ] } \left[ f( ( 1 - r ) \x + r \e_i ) - f( \x ) \right] r^{ -2 } \Lambda( dr ) \label{lfv_gen}
\end{align}
acting on functions $f \in C^2( \Sim_d )$.
This process is a model for the distribution of genetic types $[ d ]$ in a large population undergoing recurrent mutation and random mating with high-fecundity reproduction events, in which a single individual becomes ancestral to a non-trivial fraction $r \in (0, 1]$ of the whole population.
As with $\Lambda$-coalescents, we will abbreviate ``$\Lambda$-Fleming-Viot process with recurrent, finite sites, finite alleles mutation" to just ``$\Lambda$-Fleming-Viot process" throughout this paper.

The first term on the right hand side (henceforth R.H.S.) of \eqref{lfv_gen} models diffusion of the allele frequencies due to random mating, or genetic drift in the terminology of population genetics.
The second term models mutation, and the third (jump) term models high-fecundity reproductive events.
Without the jump term, i.e.~when $\Lambda = \delta_0$, $\{ \X_t \}_{ t \geq 0 }$ reduces to the classical $d$-dimensional Wright-Fisher diffusion with recurrent mutation.
We denote the law of a $\Lambda$-Fleming-Viot process with initial condition $\x \in \Sim_d$ by $\Prb_{ \x }^{ \Lambda }( \cdot )$, and expectation with respect to this law by $\E_{ \x }^{ \Lambda }\left[ \cdot \right]$.
We suppress dependence on initial conditions whenever the stationary process is meant.
For bounded $f: \Sim_d \mapsto \R$, let $P_t^{ \Lambda } f( \x ) := \E_{ \x }^{ \Lambda }[ f( \X_t ) ]$ be the associated transition semigroup, $p_t^{ \Lambda }( \x, \y )$ be the transition density and $\pi^{ \Lambda }( \x )$ be the corresponding stationary density on $\Sim_d$, assumed unique.
The transition semigroup is Feller for any $\Lambda \in \M_1( [ 0, 1 ] )$ \citep{Bertoin03}, and all densities are assumed to exist with respect to the $d$-dimensional Lebesgue measure $d\x$.

A realisation of a $\Lambda$-Fleming-Viot process specifies the relative frequencies of genetic types in an  infinite population across time.
It is natural to imagine sampling $n \in \N$ individuals from the population at a fixed time, and ask about the ancestral tree connecting the sampled individuals.
This ancestral tree is a random object, and is described by the $\Lambda$-coalescent, denoted by $\Pi := \{ \Pi_t \}_{ t \geq 0 }$, taking values in $\Pa_n^d$, the set of $[ d ]$-labelled partitions of $[ n ]$ \citep[Section 5]{Donnelly99}.
The process is started from the unlabelled partition $\psi_n := \{ \{ 1 \}, \{ 2\}, \ldots, \{ n \} \}$, and propagates backwards in time from the point of view of the reproductive evolution of the population.
When the process has $p \in \N$ blocks, any $2 \leq k \leq p$ of them merge at rate
\begin{equation*}
\lambda_{p, k} := \Lambda( \{ 0 \} ) \mathds{ 1 }_{ \{ 2 \} }( k ) + \int_{ ( 0, 1] } ( 1 - r )^{ p - k } r^{ k - 2 } \Lambda( dr ),
\end{equation*}
where $\mathds{ 1 }_A( k )$ is the indicator function of the set $A$.
We will abbreviate $\lambda_{ k, k } \equiv \lambda_k$ throughout the paper.
Once the process has merged into a single block, known as the most recent common ancestor (MRCA), an ancestral type is sampled from an initial law $\nu \in \M_1( [ d ] )$.
We focus on the case of a stationary population, in which case $\nu = m$.
This type is inherited along the branches of the $\Lambda$-coalescent tree, with mutations occurring at rate $\theta$ and mutant types sampled from $M$.
The result is a random labelling of partitions for all times $0 \leq t \leq T$, where $T$ denotes the hitting time of the MRCA.
Let $\mathbf{ P }_{ n }^{ \Lambda }$ denote the law of $\Pi$ started from $\psi_n$, and $E_{ n }^{ \Lambda }$ denote the corresponding expectation.
For the entirety of the paper we assume $\theta$ and $M$ are known, and focus on inferring $\Lambda$.
This assumption is crucial for the proof of consistency of Bayesian inference from time series data (Theorem \ref{ts_consistency} in Section \ref{inconsistency}), but not needed for correctness of the finitely many moments parametrisation in Section \ref{parameters}, or of the algorithms in Section \ref{simulation}.

The following relationship between $\Lambda$-Fleming-Viot processes and corresponding $\Lambda$-coalescents is classical \citep{Bertoin03}, and will be useful in the (in)consistency proofs in the next section:
\begin{equation}\label{duality_eq}
\E^{ \Lambda } \left[ \prod_{ i = 1 }^d X_t( i )^{ n_i } \right] = E_n^{ \Lambda }\left[ \prod_{ i = 1 }^d m( i )^{ | \Pi_t( i ) | } \right],
\end{equation}
where $n_i$ denotes the number of observed individuals with label $i \in [d]$ sampled i.i.d.~from the random measure $\X_t$, and $| \Pi_t( i ) |$ denotes the number of blocks in partition $\Pi_t$ with label $i \in [d]$.
Formula \eqref{duality_eq} is an example of so called \emph{moment duality} between stochastic processes (see e.g.~\citep{Mohle99}, and references therein for details):
Intuitively, \eqref{duality_eq} states that the law of the type frequencies of $\Lambda$-coalescent leaves coincides with a multinomial sample from a random measure drawn from the corresponding $\Lambda$-Fleming-Viot process.

\section{Posterior consistency}\label{inconsistency}

Let $Q \in \M_1( \M_1( [ 0, 1 ] ) )$ be a prior distribution for $\Lambda$, and $\n = ( n_1, \ldots, n_d ) \in \N^d$ denote observed type frequencies of $n := \sum_{ i = 1 }^d n_i$ $[ d ]$-labelled lineages generated by the $\Lambda$-coalescent.
For Borel sets $A \in \mathcal{B}( \M_1( [ 0, 1 ] ) )$, define the posterior as
\begin{equation*}
Q( A | \n ) = \frac{ \int_A \mathbf{ P }_n^{ \Lambda }( \n ) Q( d\Lambda ) }{ \int_{ \M_1( [ 0, 1 ] ) } \mathbf{ P }_n^{ \Lambda }( \n ) Q( d\Lambda ) }.
\end{equation*}
Informally, posterior consistency holds when $Q( \cdot | \n )$ concentrates on a neighbourhood of the $\Lambda_0 \in \M_1( [ 0, 1 ] )$ which generated $\n$ as $n \rightarrow \infty$.
This is a natural requirement for statistical inference as it ensures the truth can be learned from a sufficient amount of data.
For an overview of Bayesian non-parametric statistics, the reader is directed at \citep{Hjort10} and references therein.

It is well known that non-parametric posterior consistency is highly sensitive to the details of the topology defining the neighbourhood system as well as the mode of convergence \citep{Diaconis86}.
This will also be the case for our consistency result for time series data, Theorem \ref{ts_consistency}.
In contrast, the inconsistency result for contemporaneous observations, Theorem \ref{consistency}, is very universal.
Hence we postpone specification of these details until after Theorem \ref{consistency}.
\vskip 11pt
\begin{rmk}
In Theorem \ref{consistency} below, we give a formal statement of inconsistency from the point of view of Bayesian estimators, which are our interest in this paper.
Elements of the proof of Theorem \ref{consistency} will also be useful in the proving Theorem \ref{ts_consistency} later in the paper.
However, we emphasize that the negative result also holds for frequentist estimators based on contemporaneous data.
Essentially the same argument used to prove Theorem \ref{consistency} shows that the limiting likelihood
\begin{equation*}
\lim_{ n \rightarrow \infty } \E^{ \Lambda }[ q( \n | \X ) ] = \pi^{ \Lambda }( \x )
\end{equation*}
is positive for any observation $\x$ and any $\Lambda$, at least provided $\{ \pi^{\Lambda} \}_{ \Lambda }$ is a bounded family of stationary densities so that the Dominated Convergence Theorem holds without the regularising effect of the prior.
Hence, estimators cannot converge to the true $\Lambda$-measure generating the data.
Ultimately, the problem is that the $d$-dimensional vector corresponding to one exact draw from a stationary $\Lambda$-Fleming-Viot process cannot be expected to uniquely identify an infinite dimensional object.
Similar problems of identifiability also occur in estimation of lifetime distributions in branching processes \citep{Hopfner02, Hoffmann16}, for which identifiability issues can be overcome by letting the length of the observation window grow to infinity, analogously to Theorem \ref{ts_consistency} below.
\end{rmk}
\vskip 11pt
\begin{thm}\label{consistency}
Let $\n \in \N^d$ denote the observed type frequencies in a sample of size $n \in \N$ generated by a $\Lambda$-coalescent started from $\psi_n$ at a fixed time, and let $\x := \lim_{ n \rightarrow \infty } \frac{ \n }{ n } \in \M_1( [ d ] )$ denote the limiting observed relative type frequencies.
Then the limiting posterior is given by
\begin{equation*}
\lim_{ n \rightarrow \infty } Q( A | \n ) = \frac{ \int_A \pi^{ \Lambda }( \x ) Q( d\Lambda ) }{ \int_{ \M_1( [ 0, 1 ] ) } \pi^{ \Lambda }( \x ) Q( d\Lambda ) }.
\end{equation*}
In particular, the R.H.S.~is positive for any $A \in \mathcal{ B }( \M_1( [ 0, 1 ] ) )$ which intersects the support of $Q$, regardless of the $\Lambda \in \M_1( [ 0, 1 ] )$ generating the data.
\end{thm}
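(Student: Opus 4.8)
The plan is to use the moment duality \eqref{duality_eq} to rewrite the likelihood $\mathbf{ P }_n^{ \Lambda }( \n )$ as an expectation under the stationary $\Lambda$-Fleming-Viot process and then to analyse it by a local limit argument as $n \to \infty$ with $\n / n \to \x$; the crucial feature is that every $\Lambda$-dependent factor reduces to the stationary density $\pi^{ \Lambda }( \x )$, while all prefactors arising from the growing sample size are independent of $\Lambda$ and cancel in the posterior ratio. Concretely, the type frequencies of a sample of $n$ coalescent leaves have, by \eqref{duality_eq}, the law of a size-$n$ multinomial draw from the stationary random measure $\X$, so that
\begin{equation*}
\mathbf{ P }_n^{ \Lambda }( \n ) = \E^{ \Lambda }[ q( \n | \X ) ] = \int_{ \Sim_d } q( \n | \x' ) \pi^{ \Lambda }( \x' ) \, d\x', \qquad q( \n | \x' ) = \binom{ n }{ n_1, \ldots, n_d } \prod_{ i = 1 }^d ( x_i' )^{ n_i }.
\end{equation*}
This step uses that the mixed moments supplied by \eqref{duality_eq} determine the law of $\X$ on the compact simplex $\Sim_d$, and that a size-$n$ multinomial draw from $\X$ realises the configuration $\n$ with the stated probability.

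Next I would carry out the local limit analysis. Setting $Z_n := \int_{ \Sim_d } q( \n | \x' ) \, d\x'$, the renormalised map $\x' \mapsto q( \n | \x' ) / Z_n$ is the Dirichlet density with parameters $n_i + 1$, which concentrates on its mean $\n / n \to \x$. Continuity of $\pi^{ \Lambda }$ at the interior point $\x$ then yields the pointwise-in-$\Lambda$ limit
\begin{equation*}
n^{ d - 1 } \mathbf{ P }_n^{ \Lambda }( \n ) = n^{ d - 1 } Z_n \int_{ \Sim_d } \frac{ q( \n | \x' ) }{ Z_n } \pi^{ \Lambda }( \x' ) \, d\x' \longrightarrow c \, \pi^{ \Lambda }( \x ),
\end{equation*}
where $c := \lim_{ n \to \infty } n^{ d - 1 } Z_n$ is a finite universal constant; both $n^{ d - 1 }$ and $c$ are independent of $\Lambda$.

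I would then insert this into the posterior. Multiplying numerator and denominator of $Q( A | \n )$ by $n^{ d - 1 }$ leaves the ratio unchanged, and passing the limit through the integrals over $\Lambda$ by dominated convergence gives
\begin{equation*}
\lim_{ n \to \infty } Q( A | \n ) = \frac{ \int_A c \, \pi^{ \Lambda }( \x ) \, Q( d\Lambda ) }{ \int_{ \M_1( [ 0, 1 ] ) } c \, \pi^{ \Lambda }( \x ) \, Q( d\Lambda ) } = \frac{ \int_A \pi^{ \Lambda }( \x ) \, Q( d\Lambda ) }{ \int_{ \M_1( [ 0, 1 ] ) } \pi^{ \Lambda }( \x ) \, Q( d\Lambda ) },
\end{equation*}
as claimed. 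For positivity I would invoke that the stationary density of a recurrently mutating $\Lambda$-Fleming-Viot process is strictly positive throughout the interior of $\Sim_d$, so $\pi^{ \Lambda }( \x ) > 0$ for every $\Lambda$; consequently $\int_A \pi^{ \Lambda }( \x ) \, Q( d\Lambda ) > 0$ whenever $A$ meets the support of $Q$, independently of the $\Lambda$ that generated the data.

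The main obstacle I expect is justifying the interchange of the limit $n \to \infty$ with the integration against $Q$, i.e.\ exhibiting a $Q$-integrable dominating function for $n^{ d - 1 } \mathbf{ P }_n^{ \Lambda }( \n )$. Splitting the integral defining $\mathbf{ P }_n^{ \Lambda }( \n )$ into a fixed neighbourhood of the interior limit $\x$ and its complement, the near part is controlled by the local supremum of $\pi^{ \Lambda }$ there, while the far part is suppressed by the exponential (large deviation) decay of the multinomial mass away from $\n / n$; this reduces the required domination to $Q$-integrability of the local supremum of $\pi^{ \Lambda }$ near $\x$. As the preceding remark notes, in the unregularised frequentist analogue one instead assumes outright that $\{ \pi^{ \Lambda } \}_{ \Lambda }$ is a bounded family. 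A secondary subtlety is the case in which $\x$ lies on the boundary of $\Sim_d$, where $\pi^{ \Lambda }$ may diverge and the concentration estimate must be handled more carefully; under recurrent mutation, however, the stationary mass lies in the interior, so this case is non-generic and does not affect the positivity conclusion.
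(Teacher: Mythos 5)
Your proposal is correct and follows essentially the same route as the paper: moment duality turns $\mathbf{ P }_n^{ \Lambda }( \n )$ into a multinomial mixture over the stationary density, the multinomial kernel concentrates at $\x$, and the $n$-dependent prefactors cancel in the posterior ratio. The only real difference is order of operations: the paper applies Fubini first, so the concentration argument acts on the fixed sub-probability density $F_{ Q; A }( \y ) = \int_A \pi^{ \Lambda }( \y ) Q( d\Lambda )$ rather than pointwise in $\Lambda$, which sidesteps the $\Lambda$-uniform domination you identify as your main obstacle --- while your explicit $n^{ d - 1 }$ normalisation via the Dirichlet kernel is, if anything, a cleaner rendering of the delta-sequence step that the paper states somewhat informally.
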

\begin{proof}
Conditioning on the ancestral tree of the observed sample give the following representation for the posterior:
\begin{equation*}
Q( A | \n ) = \frac{ \int_A \mathbf{ P }_{ n }^{ \Lambda }( \n ) Q( d\Lambda ) }{ \int_{ \M_1( [ 0, 1 ] ) } \mathbf{ P }_{ n }^{ \Lambda }( \n ) Q( d\Lambda ) } = \frac{ \int_A E_{ n }^{ \Lambda }\left[ \mathds{ 1 }_{ \{ \n \} }( \Pi_0 ) \right] Q( d\Lambda ) }{ \int_{ \M_1( [ 0, 1 ] ) } E_{ n }^{ \Lambda }\left[ \mathds{ 1 }_{ \{ \n \} }( \Pi_0 ) \right] Q( d\Lambda ) }.
\end{equation*}
Using \eqref{duality_eq} we can write
\begin{equation*}
Q( A | \n ) = \frac{ \int_A \E^{ \Lambda }\left[  q( \n | \X ) \right] Q( d\Lambda ) }{ \int_{ \M_1( [ 0, 1 ] ) } \E^{ \Lambda }\left[  q( \n | \X ) \right] Q( d\Lambda ) },
\end{equation*}
where $q( \n | \X ) := \binom{ n }{ n_1 , \ldots, n_d } \prod_{ i = 1 }^d X_i^{ n_i }$ is the multinomial sampling probability.
We will show the requisite convergence of the numerator and denominator separately, and the result will follow by the algebra of limits.
We begin by considering the numerator.

By Fubini's theorem
\begin{equation*}
\int_A \E^{ \Lambda }\left[  q( \n | \X ) \right] Q( d\Lambda ) = \int_{ \Sim_d } q( \n | \y ) \int_A \pi^{ \Lambda }( \y ) Q( d\Lambda ) d\y =: \int_{ \Sim_d } q( \n | \y ) F_{ Q; A }( \y ) d\y,
\end{equation*}
where $F_{ Q; A }( \y )$ is a sub-probability density on $\Sim_d$ since it is a mixture of probability densities.
Hence $F_{ Q; A }( \y ) d\y$ defines a finite measure on $\Sim_d$, and $q( \n | \y ) \leq 1$ so that by the Dominated Convergence theorem
\begin{equation*}
\lim_{ n \rightarrow \infty } \int_{ \Sim_d } q( \n | \y ) F_{ Q; A }( \y ) d\y  = \int_{ \Sim_d } \lim_{ n \rightarrow \infty } q( \n | \y ) F_{ Q; A }( \y ) d\y.
\end{equation*}

By the Law of Large Numbers $\n \sim \lfloor n \x \rfloor$ so that $q( \n | \y ) \rightarrow q( \lfloor n \x \rfloor | \y )$, and by Stirling's formula
\begin{equation*}
q( \lfloor n \x \rfloor | \y ) \sim \prod_{ i = 1 }^d \left( \frac{ y_i }{ x_i } \right)^{ n x_i },
\end{equation*}
or
\begin{equation*}
\log( q( \lfloor n \x \rfloor | \y ) ) \sim n \sum_{ i = 1 }^d x_i \log\left( \frac{ y_i }{ x_i } \right) = - n \sum_{ i = 1 }^d x_i \log\left( \frac{ x_i }{ y_i } \right) = - n \operatorname{KL}( \x, \y ),
\end{equation*}
where $\operatorname{KL}( \x, \y )$ denotes the Kullback-Leibler divergence between the probability mass functions $\x$ and $\y$.
By Gibbs' inequality $\operatorname{KL}( \x, \y ) \geq 0$ and $\operatorname{KL}( \x, \y ) = 0$ if and only if $\x = \y$, so that $q( \n | \cdot ) \rightarrow \delta_{ \x }( \cdot )$ almost surely.
Hence 
\begin{equation*}
\int_{ \Sim_d } \lim_{ n \rightarrow \infty } q( \n | \y ) F_{ Q; A }( \y ) d\y = F_{ Q; A }( \x ) = \int_A \pi^{ \Lambda }( \x ) Q( d\Lambda ),
\end{equation*}
as required.
The argument for the denominator is identical after substituting $\M_1( [ 0, 1 ] )$ for the domain of integration $A$.
\end{proof}
\begin{rmk}\label{yuns_result}
There is an apparent contradiction between the negative conclusion of Theorem \ref{consistency} and recent positive results \citep[Theorems 2, 3, 4 and 5]{Spence16} showing that $\Lambda$-measures can often be identified from their site frequency spectra.
The contradiction is resolved by noting that \citet{Spence16} work directly with the expected site frequency spectrum, thereby sidestepping both the randomness of the ancestral tree and the randomness of the mutation process given the tree.
Numerical investigations by \citet{Spence16} show that their method is unreliable unless a modest number (10-100) of independent realisations of ancestral trees is available.
Independent trees cannot be sampled from populations whose ancestry is described by any non-Kingman $\Lambda$-coalescent, even in the idealised scenario of an infinitely long genome in the presence of recombination.
However, as noted by \citep{Spence16}, in practice the decay of correlations with increasing genome length is determined by the prelimiting model of evolution, and not necessarily the limiting $\Lambda$-coalescent.
For example, the selective sweep model of \citet{Durrett05} can allow for asymptotically independent trees across a genome in the presence of multiple mergers for some combinations of parameters, in which case the identifiability results of \citet{Spence16} hold.
\end{rmk}

The following example is an extension of a result by \citet{Der14}, and demonstrates that the lack of consistency can have dramatic consequences for statistical identifiability even in the case of very simple priors.
\vskip 11pt
\begin{ex}\label{ex_inconsistency}
Consider $d = 2$, $M_{ i j } = 1/2$ for $i, j \in \{ 1, 2 \}$, and set $Q( d\Lambda ) = \frac{ 1 }{ 2 } \delta_{ \delta_0 }( d\Lambda ) + \frac{ 1 }{ 2 } \delta_{ \delta_1 }( d\Lambda )$.
The stationary law $\pi^{ \Lambda }( x )$ is known in the parent-independent, two-allele case for both of these atoms \citep{Der14}:
\begin{align*}
\pi^{ \delta_0 }( x ) &= \frac{ \Gamma( 2 \theta ) }{ \left[ \Gamma( \theta ) \right]^2 } x^{ \theta - 1 } ( 1 - x )^{ \theta - 1 } \\
\pi^{ \delta_1 }( x ) &= \frac{ 1 }{ \theta } | 1 - 2 x |^{ \frac{ 1 - \theta }{ \theta } },
\end{align*}
so the expected limiting posterior probabilities can be computed assuming either data-generating measure.
These are listed in Table \ref{ex_post} for some candidate values of $\theta$, while Figure \ref{exp_post_probs} depicts limiting posterior probabilities as functions of the observed allele frequencies.
The extreme sensitivity of the posterior probabilities in Figure \ref{exp_post_probs} is akin to the ``Bayesian brittleness" investigated by \citet{Owhadi15}, resulting in inferences which are not robust to small changes in the observed allele frequencies, prior probabilities or latent parameters.
\begin{table}[!ht]
\centering
\begin{tabular}{c|c|c}
$\theta$ & $\E^{ \delta_0 }[ Q( \delta_0 | X ) ]$ & $\E^{ \delta_1 }[ Q( \delta_0 | X ) ]$ \\
\hline
0.04 & 0.84 & 0.16 \\
0.1 & 0.73 & 0.27 \\
0.5 & 0.54 & 0.46 \\
1 & 0.50 & 0.50 \\
5 & 0.65 & 0.35 \\
10 & 0.75 & 0.25 \\
17 & 0.82 &0.18
\end{tabular}
\caption{Expected posterior probabilities given an infinite number of contemporaneous observations in the parent-independent, two-allele model.}
\label{ex_post}
\end{table}
\begin{figure}[!ht]
\centering
\includegraphics[width = 0.49 \linewidth]{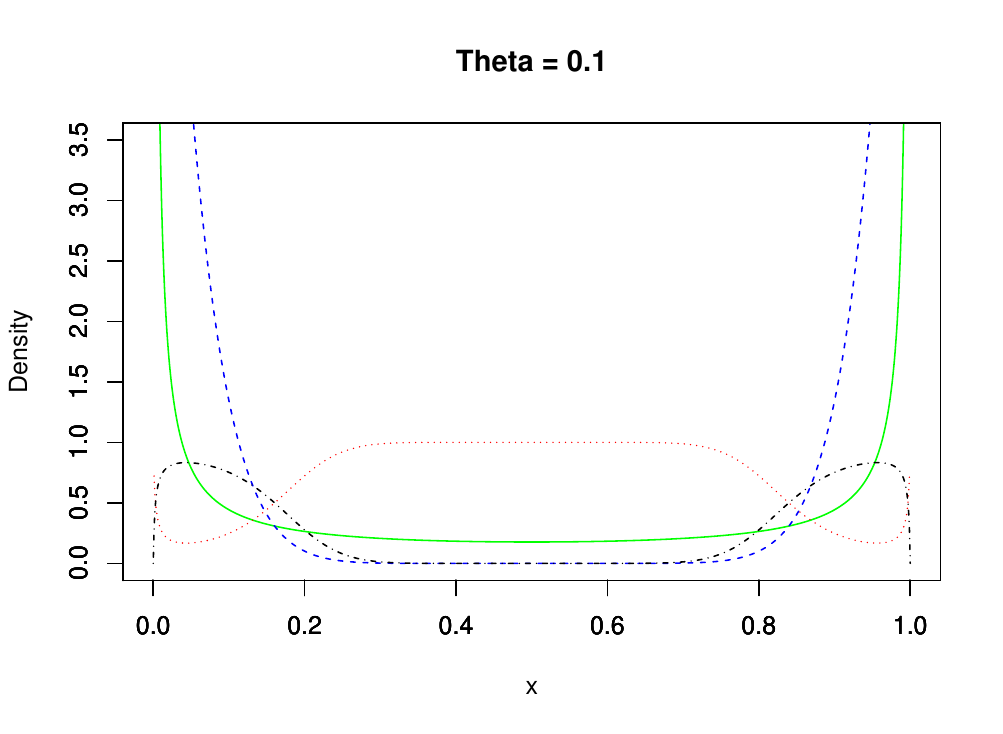}
\includegraphics[width = 0.49 \linewidth]{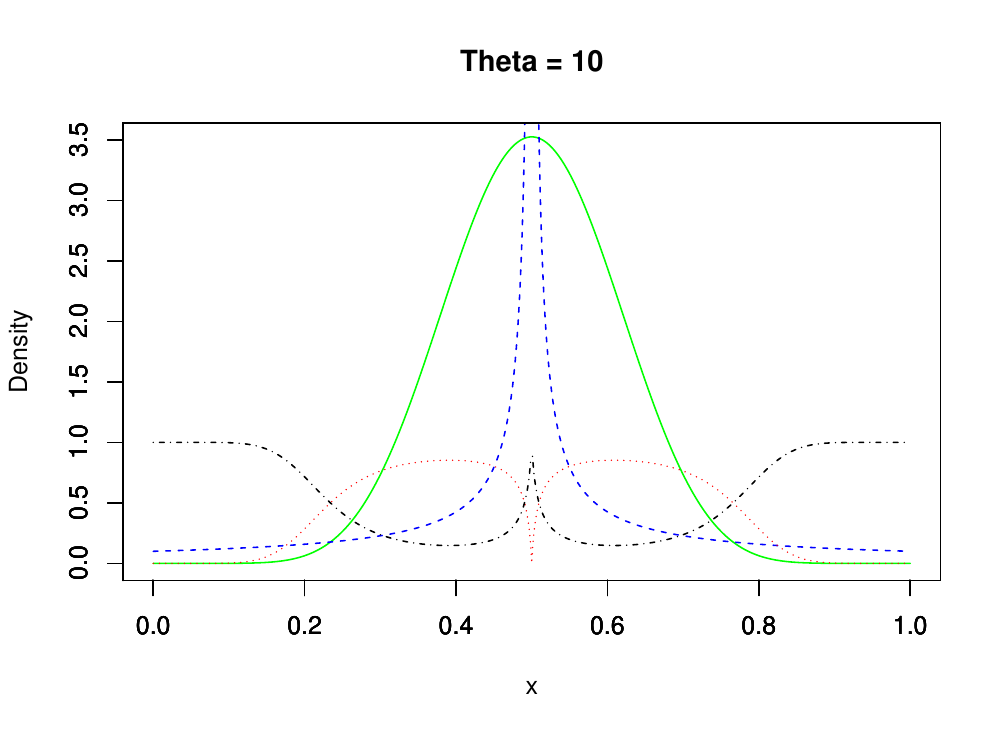}
\caption{Limiting posterior probabilities as functions of the observed allele frequency $x$ for $\theta = 0.1$ and $10$ in the parent-independent, two-allele model. $Q( \delta_0 | x )$ is dotted, $Q( \delta_1 | x )$ is dash-dotted, $\pi^{ \delta_0 }( x )$ is solid and $\pi^{ \delta_1 }( x )$ is dashed. Note the extreme sensitivity of the posterior to the observed allele frequencies near $x = 0.5$ when $\theta = 10$ and near $x = 0$ or $1$ when $\theta = 0.1$.}
\label{exp_post_probs}
\end{figure}
\end{ex}

The fact that $\pi^{ \delta_0 }( x ) = \pi^{ \delta_1 }( x )$ when $\theta = 1$ in Example \ref{ex_inconsistency} was pointed out by \citet{Der14} as proof of the fact that $\Lambda$-measures cannot in general be uniquely identified from independent draws from $\pi^{ \Lambda }( x )$.
Our calculations illustrate that inference suffers from low power and poor stability even when $\theta \neq 1$ if all observations are contemporaneous.

The inconsistency result of Theorem \ref{consistency} holds for essentially arbitrary priors.
Our next aim is to show that the posterior can be consistent when the data set is a time series of increasing length.
This does not contradict the unidentifiability claim of \citet{Der14}, because the authors only consider independent draws from $\pi^{ \Lambda }$.
In contrast, in our setting it is information about transition densities $p_{ \Delta }^{ \Lambda }( \x, \y )$ which facilitates posterior consistency.

We begin by defining the topology and weak posterior consistency following the set up of \citet{vanDerMeulen13}, who considered similar time series data for one dimensional diffusions.
In addition to topological details, posterior consistency is highly sensitive to the support of the prior, which should not exclude the truth.
This is usually guaranteed by insisting that the prior places positive mass on all neighbourhoods of the truth, typically measured in terms of Kullback-Leibler divergence.
In our setting such a support condition is provided by \eqref{kl_cond} below.
\vskip 11pt
\begin{defn}\label{prior_supp}
Fix $\eta > 0$ and let $\D_{ \eta }$ be a collection of Lebesgue probability densities on $[ \eta, 1 ]$ satisfying $\inf_{ r \in [ \eta, 1 ] } \phi( r ) > 0$ and $\sup_{ r \in [ \eta, 1 ] } \phi( r ) < \infty$ for each $\phi \in \D_{ \eta }$.
We assume that $\Lambda( dr ) = \phi( r ) dr$ for $\phi \in \D_{ \eta }$, and denote the data generating density by $\phi_0$.
\end{defn}
Restricting the support of $\phi$ to $[ \eta, 1 ]$ ensures that the $\Lambda$-coalescent can have no Kingman component, and that the $\Lambda$-Fleming-Viot process is a compound Poisson process with drift.
Furthermore, most previously studied parametric families of $\Lambda$-measures are ruled out, including all those mentioned in Section \ref{introduction}.
However, we will see in Section \ref{parameters} that the prior can be chosen to satisfy the conditions of Definition \ref{prior_supp} and place mass arbitrarily close to any desired $\Lambda$-measure, or family of $\Lambda$-measures, in a way we will make precise in Example \ref{prior_example}.

Before we define and prove consistency of Bayesian nonparametric inference, we need to first establish that $\Lambda$ is identifiable from discrete observations.
This is done in Lemma \ref{phi_ident} below.
\vskip 11pt
\begin{lem}\label{phi_ident}
For any pair $\phi_0 \neq \phi \in \D_{ \eta }$ and any $\Delta > 0$ there exists $\x \in \Sim_d$ and a test function $f$ such that $P_{ \Delta }^{ \phi_0 } f( \x ) \neq P_{ \Delta }^{ \phi } f( \x )$.
In particular, identifying $P_{ \Delta }^{ \phi }$ is equivalent to identifying $\phi$.
\end{lem}
\begin{proof}
Let $\phi_0 \in \D_{ \eta }$ and $\phi \in \D_{ \eta }$ agree on their first $n - 3$ moments, and suppose that the $(n - 2)^{\text{th}}$ moments $\lambda_n^0$ and $\lambda_n$ differ, as must be the case for some $n \geq 3$ if $\phi_0 \neq \phi$.
We begin by using the spectral representation of \citet{Griffiths14} to show that certain eigenvalues of the corresponding $\Lambda$-Fleming-Viot generators $\G^{ \phi_0 }$ and $\G^{ \phi }$ are different.

By \citep[Theorem 5]{Griffiths14}, the eigenvalues are naturally indexed with multi-indices $\n \in \N^d$.
Let $q_{ \n }^0$ and $q_{ \n }$ denote the eigenvalues of $\G^{ \phi_0 }$ and $\G^{ \phi }$ with index $\n$, respectively.
Define independent random variables $( U, Y, V ) \sim 2u \mathds{ 1 }_{ [ 0, 1 ] } \otimes \phi \otimes U(0, 1)$, and let $W := U Y$.
By \citep[equation (41)]{Griffiths14}, we have that $q_{ \n }$ can be written as
\begin{equation*}
q_{ \n } = \binom{ n }{ 2 } \E[ ( 1 - W )^{ n - 2 } ] + C( \theta, M, \n ),
\end{equation*}
for each $\n$ such that $\sum_{ i = 1 }^d n_i = n$, where the expectation is with respect to the law of $W$, and $C( \theta, M, \n )$ is a constant depending only on its arguments.
A binomial expansion yields
\begin{equation*}
\E[ ( 1 - W )^{ n - 2 } ] = \sum_{ k = 2 }^{ n - 2 } \binom{ n - 2 }{ k } ( - 1 )^k \E[ W^k ] = \sum_{ k = 2 }^{ n - 2 } \binom{ n - 2 }{ k } ( - 1 )^k \frac{ 2 }{ k + 2 } \lambda_{ k + 2 }.
\end{equation*}
All terms on the R.H.S.~coincide for $q_{ \n }$ and $q_{ \n }^0$ except the $k = n - 2$ term, for which 
\begin{equation*}
\frac{ ( -1 )^{ n - 2 } 2 }{ n } \lambda_n \neq \frac{ ( -1 )^{ n - 2 } 2 }{ n } \lambda_n'
\end{equation*}
by assumption.
Hence, the eigenvalues $q_{ \m }^0$ and $q_{ \m }$ for two densities $\phi_0$ and $\phi$ coincide for multi-indices of total degree up to $m \leq n - 1$, and eigenvalues for multi-indices of total degree $n$ differ when $n - 2$ is the order of the first moment which is distinct between $\phi_0$ and $\phi$.

Next, we use the above result on eigenvalues, in conjunction with \citep[Theorem 5]{Griffiths14}, to show that $\G^{ \phi_0 }$ and $\G^{ \phi }$ also have some distinct right eigenfunctions.
To that end, let $h_{ \n }^0( \x )$ and $h_{ \n }( \x )$ be the respective eigenfunctions of $\G^{ \phi_0 }$ and $\G^{ \phi }$ with eigenvalues $q_{ \n }^0$ and $q_{ \n }$.
Applying the representation of $\G^{ \phi }$ in \citep[equation (39)]{Griffiths14} to the monomial test function
\begin{equation*}
q( \m | \x ) := \prod_{ i = 1 }^d x_i^{ m_i },
\end{equation*}
as well as a binomial expansion to the resulting $[ ( 1 - W ) x_i + W V ]^{ m_i - 1 - \delta_{ i j } }$ terms yields
\begin{align}
\G^{ \phi } &q( \m | \x ) = \sum_{ i = 1 }^d \binom{ m_i }{ 2 } \sum_{ k = 0 }^{ m_i - 2 } ( 1 - W )^{ m - k - 2 } ( W V )^k q( \m - ( k + 1 ) \e_i | \x ) \nonumber \\
&- \sum_{ i, j = 1 }^d m_i ( m_j - \delta_{ i j } ) \sum_{ k = 0 }^{ m_i - 1 - \delta_{ i j } } \binom{ m_i - 1 - \delta_{ i j } }{ k } ( 1 - W )^{ m - k - 2 } ( W V )^k q( \m - k \e_i | \x ) \nonumber \\
&+ \theta \sum_{ i = 1 }^d \left( \sum_{ j = 1 }^d M_{ j i } x_j - x_i \right) \frac{ \partial }{ \partial x_i } q( \m | \x ). \label{gen_eq}
\end{align}
The fact that the R.H.S.~depends on $\phi$ only via powers of $W$ up to order $n - 2$ makes it clear that the action of the generators $\G^{ \phi_0 }$ and $\G^{ \phi }$ coincide on polynomials of degree $m < n$, which the eigenfunctions $h_{ \m }^0( \x )$ and $h_{ \m }( \x )$ are for $m < n$ by \citep[Theorem 5]{Griffiths14}.

Now consider $h_{ \n }^0( \x )$ and $h_{ \n }( \x )$ for a fixed $\n$ of total degree $n$.
Again by \citep[Theorem 5]{Griffiths14}, the total degree $n$ terms of both equal $\bxi^{ \n }$, and hence coincide.
See \citep[Theorem 5]{Griffiths14} for the definition of $\bxi$ as a linear, $M$-dependent transformation of $\x$.
We will focus instead on terms of total degree $n - 1$, and to that effect introduce the representations
\begin{align}
h_{ \n }( \x ) &= \bxi^{ \n } + \sum_{ i = 1 }^d a_{ \n ( \n - \e_i ) } h_{ \n - \e_i }( \x ) + \sum_{ \m : m < n - 1 } a_{ \n \m } h_{ \m }( \x ) \nonumber \\
\G^{ \phi } h_{ \n }( \x )&= - q_{ \n } h_{ \n }( \x ) + \sum_{ i = 1 }^d b_{ \n ( \n - \e_i) } h_{ \n - \e_i }( \x ) + \sum_{ \m : m < n - 1 } b_{ \n \m } h_{ \m }( \x ) \label{g_h} \\
&- \sum_{ i = 1 }^d a_{ \n ( \n - \e_i ) } q_{ \n - \e_i } h_{ \n - \e_i }( \x ) - \sum_{ \m : m < n - 1 } a_{ \n \m } q_{ \m } h_{ \m }( \x ) \nonumber
\end{align}
where the coefficients $a_{ \n \m }$ and $b_{ \n \m }$ must satisfy $a_{ \n \m } q_{ \m } = b_{ \n \m }$ in order for $h_{ \n }( \x )$ to be an eigenfunction, as first three terms on the R.H.S.~of \eqref{g_h} arise as the definition of an eigenfunction expansion of $\G^{ \phi } \bxi^{ \n }$.
Let $a_{ \n \m }^0$ and $b_{ \n \m }^0$ be the analogous coefficients for the same representations of the polynomials $h_{ \n }^0$ and $\G^{ \phi_0 } h_{ \n }^0$.

For each $\m$, the only term of total degree $m$ in $h_{ \m }( \x )$ is $\bxi^{ \m }$.
Therefore, if $b_{ \n ( \n - \e_i ) }^0 \neq b_{ \n ( \n - \e_i ) }$ for some $i \in [ d ]$, then $a_{ \n ( \n - \e_i ) }^0 \neq a_{ \n ( \n - \e_i ) }$ since $q_{ \n - \e_i }^0 = q_{ \n - \e_i }$, and we are done because then $h_{ \n }( \x )$ contains at least one term which has a coefficient different to that of the corresponding term in $h_{ \n }^0( \x )$.
Terms of total degree $n - 1$ arise in the first line of \eqref{gen_eq} by taking $k = 0$, in the second by taking $k = 1$, and do not arise in the third.
In particular, the coefficient of $q( \n - \e_i | \x )$ on the R.H.S.~of \eqref{gen_eq} is
\begin{align*}
&\binom{ n_i }{ 2 } \E[ ( 1 - W )^{ n - 2 } ] - ( n - 2 ) \binom{ n_i }{ 2 } \E[ ( 1 - W )^{ n - 3 } W V ] \\
&= \binom{ n_i }{ 2 } \left[ 1 + \sum_{ k = 1 }^{ n - 2 } \binom{ n - 2 }{ k } ( -1 )^k \frac{ 2 + k }{ 2 } \lambda_{ k + 2 } \right]
\end{align*}
These coefficients all differ for $\G^{ \phi_0 }$ and $\G^{ \phi }$ because they can be written as the same linear combinations of the first $n - 2$ moments of $\phi_0$ and $\phi$, respectively.
The lower degree terms with coefficients $a_{ \n \m}$ with $m < n - 1$ cannot contribute to the coefficients of terms of degree $n - 1$.
Finally, the coefficients of $\x^{ \n - \e_i }$ and $\bxi^{ \n - \e_i }$ coincide because $\xi$ is a linear transformation of $\x$.

In short, the eigenfunction $h_{ \n }( \x )$ has the form
\begin{equation*}
h_{ \n }( \x ) = \bxi^{ \n } + \sum_{ i = 1 }^d \frac{ \binom{ n_i }{ 2 } \left[ 1 + \sum_{ k = 1 }^{ n - 2 } \binom{ n - 2 }{ k } ( -1 )^k \frac{ 2 + k }{ 2 } \lambda_{ k + 2 } \right] }{ \binom{ n - 1 }{ 2 } \E[ ( 1 - W )^{ n - 3 } ] + C_{ \n - \e_i, \theta, M } } \bxi^{ \n - \e_i } + \text{ lower order terms},
\end{equation*}
and the coefficients of the degree $n - 1$ terms all differ between $h_{ \n }^0( \x )$ and $h_{ \n }( \x )$.

Finally, fix $\Delta > 0$, and consider $P_{ \Delta }^{ \phi_0 } h_{ \n }( \x )$ and $P_{ \Delta }^{ \phi } h_{ \n }( \x ) = e^{ - q_{ \n } \Delta } h_{ \n }( \x )$.
The former is a polynomial, but cannot be a scalar multiple of the latter because otherwise $h_{ \n }( \x )$ would be an eigenfunction of $\G^{ \phi_0 }$, which we have shown is not the case.
Hence, the two polynomials $P_{ \Delta }^{ \phi_0 } h_{ \n }( \x )$ and $P_{ \Delta }^{ \phi } h_{ \n }( \x )$ are distinct, and thus differ on some non-empty, open set, which concludes the proof.
\end{proof}
\begin{defn}
Fix a sampling interval $\Delta > 0$ and a finite Borel measure $\nu$ on $\Sim_d$ placing positive mass in all non-empty open sets.
A weak topology on $\D_{ \eta }$ is generated by open sets of the form
\begin{equation*}
U_{ f, \varepsilon }^{ \phi } := \{ \phi' : \| P_{ \Delta }^{ \phi' } f( \x )- P_{ \Delta }^{ \phi } f( \x ) \|_{ 1, \nu } < \varepsilon \},
\end{equation*}
for any $\phi \in \D_{ \eta }$, $\varepsilon > 0$ and $f \in C_b( \Sim_d )$, the set of continuous and  bounded functions on $\Sim_d$, where $\| \cdot \|_{ p, \nu }$ is the $L^p( \Sim_d, \nu )$-norm.
The Lebesgue measure is meant whenever no measure is specified.
\end{defn}
Lemma 1 of \citep{Koskela17} yields that the topology generated by $U_{f, \varepsilon}^{ \phi }$ is Hausdorff, and hence separates points.
\vskip 11pt
\begin{defn}\label{def_cons}
Let $\n_0, \ldots, \n_m$ denote $m + 1$ samples observed at times $0 , \Delta, \ldots, \Delta m$  from a stationary $\Lambda$-coalescent, with each sample being of size $n \in \N$.
See e.g.~\citep{Beaumont03} for details of how temporally structured samples can be generated.
\emph{Weak posterior consistency} holds if $Q( U_{ \phi_0 }^c | \n_0, \ldots, \n_m ) \rightarrow 0$ $\Prb^{ \phi_0 }$-a.s.~as $n, m \rightarrow \infty$, where $U_{ \phi_0 }$ is any open neighbourhood of $\phi_0 \in \D_{ \eta }$.
\end{defn}
\vskip 11pt
\begin{thm}\label{ts_consistency}
Let $\n_0, \ldots, \n_m$ be as in Definition \ref{def_cons} and $\x_0, \ldots, \x_m$ denote the observed limiting type frequencies as $n \rightarrow \infty$, i.e.~$\x_i = \lim_{ n \rightarrow \infty } \n_i / n$.
Suppose that $\D_{ \eta }$, the support of $Q$, satisfies the conditions of Definition \ref{prior_supp}.
Furthermore, for any $\varepsilon > 0$ and any $\phi_0 \in \D_{ \eta }$ suppose that
\begin{equation}\label{kl_cond}
Q\left( \phi \in \D_{ \eta }:  \int_{ \eta }^1 \left\{ \Big| \log\left( \frac{ \phi_0( r ) }{ \phi( r ) } \right) \Big| + \Big| \frac{ \phi_0( r ) }{ \phi( r ) } - 1 \Big| \right\} r^{ -2 } \phi_0( r ) dr < \varepsilon \right) > 0.
\end{equation}
Then weak posterior consistency holds for $Q$ on $\D_{ \eta }$.
\end{thm}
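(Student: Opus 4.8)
The plan is to prove the result in two stages mirroring the two limits, first using the large-sample limit $n \to \infty$ to reduce the multinomially-sampled data to exact observations of the $\Lambda$-Fleming-Viot process at the times $0, \Delta, \ldots, \Delta m$, and then invoking a Schwartz-type argument for the resulting discretely observed, ergodic Markov chain along the lines of \citet{vanDerMeulen13}.

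For the first stage I would extend the computation in the proof of Theorem \ref{consistency} from a single contemporaneous sample to the joint law of the whole time series. Conditioning on the ancestral trees and applying the duality \eqref{duality_eq} at each sampling time expresses $\mathbf{P}_n^{\phi}(\n_0, \ldots, \n_m)$ as an expectation, under $\Prb^{\phi}$, of a product $\prod_{i=0}^m q(\n_i \mid \X_{\Delta i})$ of multinomial sampling probabilities evaluated along the process path. The Markov property factorises the associated mixed joint density on $\Sim_d^{m+1}$, and the Stirling/Gibbs argument shows that, as $n \to \infty$, each factor $q(\n_i \mid \cdot)$ concentrates on a point mass at $\x_i$. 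Dominated convergence, justified exactly as before since $q \leq 1$ and the mixed joint density is integrable, then yields that for each fixed $m$ the posterior $Q(\cdot \mid \n_0, \ldots, \n_m)$ converges to the posterior built from the exact-observation likelihood $\pi^{\phi}(\x_0) \prod_{i=1}^m p_\Delta^{\phi}(\x_{i-1}, \x_i)$. This reduces the problem to consistency of the posterior for the discretely observed $\Lambda$-Fleming-Viot chain as $m \to \infty$.

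For the second stage I would run the standard Schwartz decomposition $Q((U_{\phi_0})^c \mid \cdot) = N_m / D_m$, where $L_m$ denotes the exact-observation likelihood above, $D_m = \int L_m(\phi)/L_m(\phi_0)\, Q(d\phi)$, and $N_m$ is the same integral restricted to $(U_{\phi_0})^c$. The denominator is bounded below using the support condition \eqref{kl_cond}: since the $\Lambda$-Fleming-Viot process with support in $[\eta,1]$ is a compound Poisson process with drift, the one-step log-likelihood ratio between $\phi_0$ and $\phi$ is controlled by the integral in \eqref{kl_cond}, which plays the role of the Kullback--Leibler divergence between the jump intensities $r^{-2}\phi_0(r)\,dr$ and $r^{-2}\phi(r)\,dr$. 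By the ergodic theorem for the stationary chain, $\frac{1}{m}\log(L_m(\phi_0)/L_m(\phi))$ converges $\Prb^{\phi_0}$-a.s.\ to this divergence, so restricting to the positive-mass set in \eqref{kl_cond} gives $D_m \geq e^{-m\varepsilon}$ eventually, almost surely. For the numerator I would construct uniformly exponentially consistent tests of $\phi_0$ against the complement of a basic neighbourhood $U_{f,\varepsilon}^{\phi_0}$. This is where the weak topology earns its definition: because $U_{f,\varepsilon}^{\phi_0}$ is phrased in terms of $P_\Delta^{\phi} f$, a test can be based on the empirical average of $f$ along observed transitions, which estimates $P_\Delta^{\phi_0} f$ consistently and with exponential concentration by ergodicity. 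Identifiability (Lemma \ref{phi_ident}) and the Hausdorff property of the topology guarantee that such tests separate $\phi_0$ from the complement of any basic neighbourhood, and a covering argument extends this to a general open $U_{\phi_0}$. Taking $\varepsilon$ below the test exponent and combining the two bounds forces $N_m/D_m \to 0$.

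I expect the main obstacle to be the denominator step, namely verifying rigorously that \eqref{kl_cond}, stated in terms of jump intensities, genuinely controls the Kullback--Leibler divergence between the transition laws $p_\Delta^{\phi_0}$ and $p_\Delta^{\phi}$ of the discretely observed chain, and that the corresponding log-likelihood ratios obey the requisite ergodic law of large numbers cleanly enough to extract $D_m \geq e^{-m\varepsilon}$. This calls for a Girsanov-type change of measure for the compound Poisson $\Lambda$-Fleming-Viot process together with integrability estimates flowing from the uniform bounds on $\phi$ in Definition \ref{prior_supp}; the restriction of the support to $[\eta,1]$, giving no Kingman part and densities that are bounded and bounded away from zero, is precisely what renders these estimates tractable.
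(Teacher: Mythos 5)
Your overall architecture matches the paper's: the first stage (extending the argument of Theorem \ref{consistency} to the time series via duality, Stirling/Gibbs concentration and dominated convergence, reducing to exact observations of the chain at times $0, \Delta, \ldots, \Delta m$) is exactly what the paper does, and your treatment of the denominator is also on target --- the paper's Lemma \ref{kl_lemma} shows precisely that the jump-intensity integral in \eqref{kl_cond} controls $\KL(p_\Delta^{\phi_0}, p_\Delta^{\phi})$, via the change-of-measure argument for the compound Poisson representation that you anticipate, and this feeds into the standard Schwartz lower bound $D_m \geq e^{-m\varepsilon}$ (in the paper, via \citep[Lemma 5.2]{vanDerMeulen13}).

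The genuine gap is in your numerator step. You propose tests based on ``the empirical average of $f$ along observed transitions, which estimates $P_\Delta^{\phi_0} f$ consistently.'' But the empirical average $\frac{1}{m}\sum_i f(\x_i)$ estimates the stationary expectation $\pi^{\phi_0}(f)$, not the function $\x \mapsto P_\Delta^{\phi_0} f(\x)$; and the neighbourhoods $U_{f,\varepsilon}^{\phi_0}$ are defined by the $L^1(\nu)$-distance between the \emph{functions} $P_\Delta^{\phi} f$ and $P_\Delta^{\phi_0} f$. To test membership of $B = \{\phi : \|P_\Delta^{\phi} f - P_\Delta^{\phi_0} f\|_{1,\nu} > \varepsilon\}$ from the data you must localize: pool transitions whose starting points lie in a small cell $I_j$ and exploit that $P_\Delta^{\phi} f$ varies little over $I_j$, \emph{uniformly in $\phi$ over the support of the prior} (otherwise the covering $B \subset \bigcup_j B_j^{\pm}$ into sets where the deviation has a definite sign on all of $I_j$ fails, and the exponential bound on $\int_{B_j^{\pm}}$ cannot be run). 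This uniform equicontinuity of $\{P_\Delta^{\phi} f : \phi \in \D_\eta\}$ is the main technical content of the paper's proof (Lemma \ref{unif_equicontinuity}), and it is not free: the counterpart in \citep{Koskela15b} relies on a positive definite diffusion coefficient, which is absent here since $\Lambda(\{0\}) = 0$ on $\D_\eta$. The paper instead rewrites the jump part of the generator \eqref{lfv_gen} using the map $\Psi(u,\x)$, builds a synchronous coupling, and proves the Lipschitz-type bound $\int_0^1 \|\Psi(u,\x) - \Psi(u,\y)\|_2^2\, du \leq 2 d^{3/2} \|\x - \y\|_2$ so that \citep[Proposition 2.2]{Wang10} applies with constants independent of $\phi$. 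Your proposal, by flagging only the denominator as the hard step, misses that this equicontinuity estimate is where the $\Lambda$-Fleming-Viot structure actually has to be used.
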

\vskip 11pt
\begin{rmk}
A similar result for jump diffusions with unit diffusion coefficient was established in \citep[Theorem 1]{Koskela17}, and our proof will follow a similar structure.
Before presenting the proof, let us highlight how the present result differs from the jump diffusion case.
Both proofs of consistency require verification of a Kullback-Leibler condition for the prior, 
and uniform equicontinuity of the family of semigroups corresponding to densities supported by the prior. 
The former result is immediate by the same argument used to prove \citep[Lemma 2]{Koskela17}, whose statement is provided below in Lemma \ref{kl_lemma} in the interest of a self-contained proof.
The latter, Lemma \ref{unif_equicontinuity} below, is different to its counterpart, \citep[Lemma 3]{Koskela17}, which relies on positive definiteness of the diffusion coefficient.
\end{rmk}
\begin{proof}
For fixed $m \in \N$, the same argument used to prove Theorem \ref{consistency} yields that the following convergence holds $\Prb^{ \phi_0 }$-a.s.~as $n \rightarrow \infty$:
\begin{equation*}
\lim_{ n \rightarrow \infty } Q( d\phi | \n_0, \ldots, \n_m ) \propto \pi^{ \phi }( \x_0 ) \prod_{ i = 1 }^m p_{ \Delta }^{ \phi }( \x_{ i - 1 }, \x_i ) Q( d\phi ).
\end{equation*}
Hence it is sufficient to establish posterior consistency for $m + 1$ exact observations from a stationary $\Lambda$-Fleming-Viot process as $m \rightarrow \infty$.
We achieve this by adapting the proof of \citep[Theorem 1]{Koskela17}, which entails verifying two conditions. 
The first is that the prior places sufficient mass in Kullback-Leibler neighbourhoods of $\phi_0$, i.e.~that for any $\varepsilon > 0$ we have
\begin{equation}\label{kl_property}
Q( \phi \in \D_{ \eta } : \KL( \phi_0, \phi ) < \varepsilon ) > 0,
\end{equation} 
where $\KL$ is Kullback-Leibler divergence between $p_{ \Delta }^{ \phi_0 }$ and $p_{ \Delta }^{ \phi }$:
\begin{equation*}
\KL( \phi_0, \phi ) := \int_{ \Sim_d } \int_{ \Sim_d } \log\left( \frac{ p_{ \Delta }^{ \phi_0 }( \x, \y ) }{ p_{ \Delta }^{ \phi }( \x, \y ) } \right) p_{ \Delta }^{ \phi_0 }( \x, \y ) \pi^{ \phi_0 }( \x ) d\y d\x.
\end{equation*}
The second is establishing uniform equicontinuity of $\{ P_{ \Delta }^{ \phi } : \phi \in \D_{ \eta } \}$: 
\begin{equation*}
\sup_{ \phi \in \D_{ \eta } } \sup_{ \| \x - \y \|_2 < \delta } | P_{ \Delta }^{ \phi } f( \x ) - P_{ \Delta }^{ \phi } f( \y ) | < \varepsilon
\end{equation*}
for each $\Delta > 0$ and $f \in C_b( \Sim_d )$.

Condition \eqref{kl_property} follows from a straightforward modification of \citep[Lemma 2]{Koskela17}.
A statement of this result, adapted to the present context and notation, is provided below.
Its proof follows the same structure that of as \citep[Lemma 2]{Koskela17}, and is omitted.

\begin{lem}\label{kl_lemma}
Condition \eqref{kl_cond} implies condition \eqref{kl_property} for any $\varepsilon > 0$.
\end{lem}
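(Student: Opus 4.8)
The plan is to exploit the special structure guaranteed by Definition \ref{prior_supp}: since $\phi$ is supported on $[\eta,1]$ and bounded, $\int_\eta^1 r^{-2}\phi(r)dr < \infty$ and the $\Lambda$-Fleming-Viot process is a compound Poisson process with drift, the only randomness being a finite-activity jump mechanism while the mutation flow between jumps is deterministic and identical under $\phi_0$ and $\phi$. I would first reduce the relative entropy between the transition densities to the relative entropy between the laws $\Prb_\x^{\phi_0}$ and $\Prb_\x^{\phi}$ of the two processes on path space over $[0,\Delta]$. Since the endpoint $\X_\Delta$ is a measurable functional of the path, the data-processing inequality for relative entropy gives, for each fixed starting point $\x$,
\begin{equation*}
\int_{\Sim_d}\log\frac{p_\Delta^{\phi_0}(\x,\y)}{p_\Delta^{\phi}(\x,\y)}\,p_\Delta^{\phi_0}(\x,\y)\,d\y \leq \E_\x^{\phi_0}\left[\log\frac{d\Prb_\x^{\phi_0}}{d\Prb_\x^{\phi}}\right],
\end{equation*}
so it suffices to control the right-hand side and then integrate it against $\pi^{\phi_0}(\x)d\x$.

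Next I would compute the path-space likelihood ratio by a Girsanov-type change of measure for marked point processes. The mutation drift and the type-selection mechanism are identical under $\Prb^{\phi_0}$ and $\Prb^{\phi}$ and cancel; only the jump-size intensities $r^{-2}\phi_0(r)dr$ and $r^{-2}\phi(r)dr$ differ. Crucially, because $\sum_{i=1}^d x_i = 1$ the total jump intensity is state-independent, so the resulting expression does not depend on $\x$. Taking the expectation under $\Prb^{\phi_0}$ and applying Campbell's formula to the compensated jump measure yields
\begin{equation*}
\E_\x^{\phi_0}\left[\log\frac{d\Prb_\x^{\phi_0}}{d\Prb_\x^{\phi}}\right] = \Delta\int_\eta^1\left[\phi_0(r)\log\frac{\phi_0(r)}{\phi(r)} - \phi_0(r) + \phi(r)\right]r^{-2}\,dr,
\end{equation*}
the relative entropy of the two jump intensity measures scaled by $\Delta$. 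Integrating the data-processing bound against $\pi^{\phi_0}$ then bounds $\KL(\phi_0,\phi)$ above by this same quantity.

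It then remains to dominate this integral by the integral in \eqref{kl_cond}. The logarithmic contribution $\int_\eta^1 \phi_0\log(\phi_0/\phi)r^{-2}dr$ is immediately bounded in absolute value by the first term $\int_\eta^1 |\log(\phi_0/\phi)|\phi_0 r^{-2}dr$ of \eqref{kl_cond}. The remaining contribution is the difference of total jump rates $\int_\eta^1 (\phi - \phi_0)r^{-2}dr = \int_\eta^1 (\phi/\phi_0 - 1)\phi_0 r^{-2}dr$, which I would bound by the second term of \eqref{kl_cond}; here the uniform positivity and boundedness of densities in $\D_\eta$, which keep $\phi_0/\phi$ bounded away from $0$ and $\infty$, let me pass between $|\phi/\phi_0 - 1|$ and $|\phi_0/\phi - 1|$ at the cost of a finite constant. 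Choosing $\varepsilon'$ small enough that the resulting bound falls below $\varepsilon$ shows that the set of $\phi \in \D_\eta$ for which the integral in \eqref{kl_cond} is below $\varepsilon'$ is contained in $\{\phi : \KL(\phi_0,\phi) < \varepsilon\}$; since the former has positive $Q$-mass by hypothesis, so does the latter, which is exactly \eqref{kl_property}.

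I expect the main obstacle to be this final comparison, and specifically the control of the total-rate difference $\int_\eta^1 (\phi - \phi_0)r^{-2}dr$: the second term of \eqref{kl_cond} is phrased through $\phi_0/\phi$ rather than $\phi/\phi_0$, so the argument genuinely relies on the boundedness built into Definition \ref{prior_supp} to render the two comparable, and some care is needed to keep the constant uniform over the relevant set of $\phi$. The other point requiring justification is the validity of the Girsanov formula and the finiteness of all intensities, but these are routine once the finite-activity structure on $[\eta,1]$ is in hand.
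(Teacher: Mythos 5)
Your overall architecture is exactly the one the paper relies on: the paper omits the proof and defers to \citep[Lemma 3]{Koskela15b}, whose argument is precisely your chain of (i) data-processing from the endpoint to the path on $[0,\Delta]$, (ii) an explicit Girsanov likelihood ratio in which the drift and the type-selection mechanism cancel and only the jump intensities $r^{-2}\phi_0(r)dr$ and $r^{-2}\phi(r)dr$ survive, and (iii) the observation that the resulting path-space relative entropy is state-independent, so integrating against $\pi^{\phi_0}$ costs nothing. Your identification of $\E_{\x}^{\phi_0}[\log(d\Prb_{\x}^{\phi_0}/d\Prb_{\x}^{\phi})]$ as $\Delta$ times the relative entropy of the two finite intensity measures is correct.

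The one step that would not go through as written is your treatment of the mass-difference term $\int_{\eta}^1(\phi-\phi_0)r^{-2}dr$. You propose to convert $|\phi/\phi_0-1|$ into $|\phi_0/\phi-1|$ ``at the cost of a finite constant'' using ``the uniform positivity and boundedness of densities in $\D_{\eta}$,'' but Definition \ref{prior_supp} only bounds each $\phi$ individually; there is no constant uniform over $\D_{\eta}$, and in particular none uniform over the set $\{\phi:\text{integral in \eqref{kl_cond}}<\varepsilon'\}$, which is what the set-inclusion argument requires. The correct mechanism is not density boundedness but the facts that $\int_{\eta}^1\phi\,dr=\int_{\eta}^1\phi_0\,dr=1$ and $1\le r^{-2}\le\eta^{-2}$ on $[\eta,1]$. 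Split at $A:=\{r:\phi(r)\le 2\phi_0(r)\}$. On $A$ one has $|\phi-\phi_0|=\phi\,|\phi_0/\phi-1|\le 2\phi_0|\phi_0/\phi-1|$, so $\int_A|\phi-\phi_0|r^{-2}dr\le 2\int_{\eta}^1|\phi_0/\phi-1|\phi_0 r^{-2}dr$. On $A^c$ the integrand is positive and
\begin{equation*}
\int_{A^c}(\phi-\phi_0)r^{-2}dr\le\eta^{-2}\int_{A^c}(\phi-\phi_0)dr=\eta^{-2}\int_{A}(\phi_0-\phi)dr\le\eta^{-2}\int_{\eta}^1(\phi_0-\phi)^+dr\le\eta^{-2}\int_{\eta}^1|\phi_0/\phi-1|\phi_0 r^{-2}dr,
\end{equation*}
using $\phi_0-\phi=\phi(\phi_0/\phi-1)\le\phi_0|\phi_0/\phi-1|$ on $\{\phi<\phi_0\}$ and $r^{-2}\ge1$. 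Hence the whole path-space relative entropy is bounded by $\Delta(3+\eta^{-2})$ times the integral in \eqref{kl_cond}, the constant is uniform, and the inclusion of the \eqref{kl_cond}-ball in the $\KL$-ball follows. With this replacement your proof is complete and matches the cited argument.
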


It remains to establish uniform equicontinuity on the semigroup $\{ P_{ \Delta }^{ \phi } f : \phi \in \D_{ \eta } \}$ for $f \in C_b( \Sim_d )$.
This can be done by verifying the hypotheses of \citep[Lemma 3]{Koskela17}, which we will do below.

\begin{lem}\label{unif_equicontinuity}
For each $\Delta > 0$ and $f \in C_b( \Sim_d )$, the collection $\{ P_{ \Delta }^{ \phi } f : \phi \in \D_{ \eta } \}$ is uniformly equicontinuous: for any $\varepsilon > 0$ there exists $\delta := \delta( \varepsilon, f, \Delta ) > 0$ such that 
\begin{equation*}
\sup_{ \phi \in \D_{ \eta } } \sup_{ \| \x - \y \|_2 < \delta } | P_{ \Delta }^{ \phi } f( \x ) - P_{ \Delta }^{ \phi } f( \y ) | < \varepsilon.
\end{equation*}
\end{lem}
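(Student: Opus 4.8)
The plan is to exploit the special structure forced by Definition \ref{prior_supp}: since each $\phi \in \D_{\eta}$ is supported on $[\eta,1]$ with $\eta > 0$, the measure $\Lambda(dr) = \phi(r)dr$ has no atom at $0$, the diffusion term in \eqref{lfv_gen} vanishes, and $\X$ is a piecewise-deterministic jump process. Between jumps it follows the linear mutation flow $\Phi_t(\x) = \x\, e^{\theta t (M - I)}$, and jumps occur at the \emph{state-independent} rate $\lambda_{\phi} := \int_{[\eta,1]} r^{-2}\phi(r)\,dr \le \eta^{-2}$, because $\sum_i x_i = 1$ cancels the state dependence of the total intensity. At a jump one draws a size $r$ from the state-independent law $r^{-2}\phi(r)\,dr/\lambda_{\phi}$, then a parent type $i$ with probability $x_i$, and moves to $(1-r)\x + r\e_i$. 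Finiteness of $\lambda_{\phi}$, guaranteed by $\eta > 0$, makes this construction rigorous, with only finitely many jumps in $[0,\Delta]$ almost surely.

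Given this, I would prove the stronger statement that $P_{\Delta}^{\phi} f$ is Lipschitz uniformly in $\phi$, via a synchronous coupling of two copies $\X^{\x}, \X^{\y}$ started from $\x, \y \in \Sim_d$. The two copies share the same Poisson jump times (rate $\lambda_{\phi}$), the same jump sizes $r$, and the same drift flow between jumps; only the parent choices are coupled, using the maximal coupling of the laws $\x$ and $\y$ on $[d]$, for which the probability of selecting different parents equals $\tfrac12\|\x-\y\|_1$. This coupling preserves marginals, so $\E[f(\X^{\x}_{\Delta})] = P_{\Delta}^{\phi} f(\x)$ and likewise for $\y$.

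The crux is to track $D_t := \|\X^{\x}_t - \X^{\y}_t\|_1$ and show it is a supermartingale with constants uniform in $\phi$. Between jumps the difference evolves as $(\x - \y)\,e^{\theta t(M-I)}$, and since $e^{\theta t(M-I)}$ is a stochastic matrix, right-multiplication is $\ell^1$-nonexpansive, so $D$ is non-increasing along the flow. At a jump of common size $r$: with probability $1 - \tfrac12 D_{t^-}$ the parents agree and the difference contracts to $(1-r)(\X^{\x}_{t^-} - \X^{\y}_{t^-})$, giving $D_t = (1-r)D_{t^-}$; with probability $\tfrac12 D_{t^-}$ the parents differ, say $i \ne j$, and $D_t \le (1-r)D_{t^-} + r\|\e_i - \e_j\|_1 = (1-r)D_{t^-} + 2r$. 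Taking conditional expectations, the quadratic terms cancel,
\begin{equation*}
\E[D_t \mid D_{t^-}, r] \le \Big(1 - \tfrac12 D_{t^-}\Big)(1-r)D_{t^-} + \tfrac12 D_{t^-}\big[(1-r)D_{t^-} + 2r\big] = D_{t^-},
\end{equation*}
uniformly in $\phi$ and $r$. Hence $\E[D_{\Delta}] \le D_0 = \|\x - \y\|_1$, and for $f \in \operatorname{Lip}(\Sim_d)$ with constant $L$,
\begin{equation*}
|P_{\Delta}^{\phi} f(\x) - P_{\Delta}^{\phi} f(\y)| \le \E\big[|f(\X^{\x}_{\Delta}) - f(\X^{\y}_{\Delta})|\big] \le L\,\E[\|\X^{\x}_{\Delta} - \X^{\y}_{\Delta}\|_2] \le L\,\E[D_{\Delta}] \le L\|\x - \y\|_1 \le L\sqrt{d}\,\|\x - \y\|_2,
\end{equation*}
so choosing $\delta = \varepsilon/(L\sqrt{d})$, independent of $\phi$, yields the claim.

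The main obstacle is the parent-disagreement analysis: disagreeing jumps can \emph{increase} $D$, and the argument works only because the disagreement probability is itself proportional to $D_{t^-}$ (via maximal coupling), so the destabilising contribution is quadratic in $D_{t^-}$ and is exactly absorbed by the linear contraction from agreeing jumps. Making this precise relies on the state-independence of both the total jump rate and the size law --- consequences of $\sum_i x_i = 1$ together with $\operatorname{supp}\phi \subset [\eta,1]$ --- so that a single common $r$ and a maximally coupled parent can be used at every jump and all estimates remain uniform over the whole prior support $\D_{\eta}$.
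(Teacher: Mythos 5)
Your proof is correct, but it takes a genuinely different route from the paper's. The paper also argues by coupling, but it works at the level of generators: it rewrites \eqref{lfv_gen} in the Bertoin--Le Gall form with a common uniform $u$ selecting the parent block (a quantile coupling of the parent choices), applies the test function $g( z ) = z ( 1 + z )^{ -1 }$ to $\| \x - \y \|_2$ under the coupling generator, and then invokes Proposition 2.2 of Wang (2010); the computational heart is the bound $\int_0^1 \| \Psi( u, \x ) - \Psi( u, \y ) \|_2^2 du \leq 2 d^{ 3/2 } \| \x - \y \|_2$, i.e.~the expected squared jump discrepancy is \emph{linear} rather than quadratic in the separation. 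Your argument instead exploits the finite total jump rate $\lambda_{ \phi } \leq \eta^{ -2 }$ forced by Definition \ref{prior_supp} to realise the process explicitly as a piecewise-deterministic Markov process, couples the two copies with a \emph{maximal} coupling of the parent labels, and runs a direct supermartingale computation on the $\ell^1$ distance; your cancellation of the quadratic terms plays exactly the role of the paper's linear bound on the jump discrepancy. What each buys: your proof is self-contained (no appeal to an external coupling proposition), and it delivers the strictly stronger conclusion that $P_{ \Delta }^{ \phi } f$ is Lipschitz with constant $L \sqrt{ d }$ uniformly in $\phi$, $\Delta$ and even $\eta$; the paper's generator-level argument, by contrast, never needs the finite-activity PDMP structure, so it is the one that would survive relaxations of the support condition near $0$ (e.g.~$\phi( r ) \sim r^2$ as $r \searrow 0$, mentioned after Theorem \ref{ts_consistency}), where your explicit Poisson-clock construction breaks down because $\lambda_{ \phi } = \infty$. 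The only points to tidy are notational: the conditional expectation at a jump should be taken given the full pre-jump states $( \X^{ \x }_{ t^- }, \X^{ \y }_{ t^- } )$ rather than just $D_{ t^- }$ (the bound holds state-wise, so nothing is lost), and the a.s.~finiteness of the number of jumps should be cited when passing from the one-jump inequality to $\E[ D_{ \Delta } ] \leq D_0$.
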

\begin{proof}
We begin by showing that the required uniform equicontinuity is true for $f \in \operatorname{Lip}(\Sim_d)$, the set of Lipschitz functions on $\Sim_d$.

By \citep[Proposition 2.2, in particular equation (2.2)]{Wang10}, a sufficient condition for equicontinuity for a fixed $\phi \in \D_{ \eta }$ is that for $\| \x - \y \|_2 \leq \delta$ we have
\begin{equation}\label{lip_gen}
\frac{ \theta ( \x - \y )^T ( M - \mathbb{ I }_d ) ( \x - \y ) }{ \| \x - \y \|_2 ( 1 + \| \x - \y \|_2 ) } + \int_{ \eta }^1 \sum_{ i = 1 }^d | x_i - y_i | r^{ -2 } \phi( r ) dr \leq C_{ \delta } \| \x - \y \|_2,
\end{equation}
where $\mathbb{ I }_d$ is the $d \times d$ identity matrix.
Now the first term is trivially bounded by $\theta \| M - \mathbb{ I }_d \|_2 \| \x - \y \|_2$, and the second by $\eta^{ -2 } \sqrt{ d } \| \x - \y \|_2$ using the Cauchy-Schwarz inequality.
Still by \citep[Proposition 2.2]{Wang10}, for $f \in \operatorname{Lip}( \Sim_d )$ and $\| \x - \y \|_2 \leq \delta$ we have the bound
\begin{equation*}
| P_{ \Delta }^{ \phi } f( \x ) - P_{ \Delta }^{ \phi } f( \y ) | \leq ( 2 + \delta^{ -1 } ) e^{ C_{ \delta } \Delta } C( f ) \| \x - \y \|_2,
\end{equation*}
where $C( f )$ is a constant depending only on $f$.
Uniformity in $\phi \in \D_{ \eta }$ now follows from the fact that the constant $C_{ \delta }$ in \eqref{lip_gen} is independent of $\phi$.

Now consider a general test function $f \in C_b( \Sim_d )$.
The simplex $\Sim_d$ is compact, meaning that Lipschitz functions are dense in $C_b( \Sim_d )$.
Thus for any $\beta > 0$, there exists $g \in \operatorname{ Lip }( \Sim_d )$ be such that $\| g - f \|_{ \infty } < \beta$.
The triangle inequality then yields the elementary bound
\begin{align*}
&\sup_{ \phi \in \D_{ \eta } } \sup_{ \| \x - \y \|_2 < \delta } | P_{ \Delta }^{ \phi } f( \x ) - P_{ \Delta }^{ \phi } f( \y ) | \\
&\leq \sup_{ \phi \in \D_{ \eta } } \sup_{ \| \x - \y \|_2 < \delta } \left\{ | P_{ \Delta }^{ \phi } f( \x ) - P_{ \Delta }^{ \phi } g( \x ) | + | P_{ \Delta }^{ \phi } f( \y ) - P_{ \Delta }^{ \phi } g( \y ) | + | P_{ \Delta }^{ \phi } g( \x ) - P_{ \Delta }^{ \phi } g( \y ) |\right\}.
\end{align*}
Now, the first term on the R.H.S.~can be bounded by
\begin{equation*}
| P_{ \Delta }^{ \phi } f( \x ) + P_{ \Delta }^{ \phi } g( \x ) | = \E_{ \x }^{ \phi }[ f( X_t ) - g( X_t ) ] \leq \E_{ \x }^{ \phi }[ | f( X_t ) - g( X_t ) | ] < \beta
\end{equation*}
by construction of $g$.
The second term is bounded analogously.
The last term can be made arbitrarily small by choice of sufficiently small $\delta$.
For fixed $f$, all three bounds are uniform in $\phi$, which concludes the proof.
\end{proof}
The remainder of the proof follows as in \citep{Koskela17}.
It suffices to show that for $f \in C_b( \Sim_d )$ and $B := \{ \phi \in \D_{ \eta } : \| P_{ \Delta }^{ \phi } f - P_{ \Delta }^{ \phi_0 } f \|_{ 1, \nu } > \varepsilon \}$ we have $Q( B | \x_0, \ldots, \x_m ) \rightarrow 0$ $\Prb^{ \phi_0 }$-almost surely.
To that end we fix $f \in \operatorname{Lip}( \Sim_d )$, $\varepsilon > 0$ and thus the set $B$.
Condition \eqref{kl_property} implies that \citep[Lemma 5.2]{vanDerMeulen13} holds, so that if for a measurable collection of subsets $C_m \subset \D_{ \eta }$ there exists $c > 0$ such that
\begin{equation*}
e^{ m c } \int_{ C_m } \pi^{ \phi }( \x_0 ) \prod_{ i = 1 }^m p_{ \Delta }^{ \phi }( \x_{ i - 1 }, \x_i ) Q( d\phi ) \rightarrow 0
\end{equation*}
$\Prb^{ \phi_0 }$-almost surely, then $Q( C_m | \x_0, \ldots, \x_m ) \rightarrow 0$ $\Prb^{ \phi_0 }$-almost surely as well.
Likewise, Lemma \ref{unif_equicontinuity} implies \citep[Lemma 5.3]{vanDerMeulen13}: there exists a compact subset $K \subset \Sim_d$, $N \in \N$ and compact, connected sets $I_1, \ldots, I_N$ that cover $K$ such that 
\begin{equation*}
B \subset \bigcup_{ j = 1 }^N B_j^+ \cup \bigcup_{ j = 1 }^N B_j^-,
\end{equation*}
where
\begin{equation*}
B_j^{ \pm } := \left\{ \phi \in \D_{ \eta } : P_{ \Delta }^{ \phi } f( \x ) - P_{ \Delta }^{ \phi_0 } f( \x ) > \pm \frac{ \varepsilon }{ 4 \nu( K ) } \text{ for every } \x \in I_j \right\}.
\end{equation*}
Thus it is only necessary to show $Q( B_j^{ \pm } | \x_0, \ldots, \x_m ) \rightarrow 0$ $\Prb^{ \phi_0 }$-almost surely.
Define the stochastic process
\begin{equation*}
D_m := \left( \int_{ B_j^+ } \pi^{ \phi }( \x_0 ) \prod_{ i = 1 }^m p_{ \Delta }^{ \phi }( \x_{ i - 1 }, \x_i ) Q( d\phi ) \right)^{ 1 / 2 }.
\end{equation*}
Now $D_m \rightarrow 0$ exponentially fast as $m \rightarrow \infty$ by an argument identical to that used to prove \citep[Theorem 3.5 ]{vanDerMeulen13}.
The same is also true of the analogous stochastic process defined by integrating over $B_j^-$, which completes the proof.
\end{proof}

In the next section we show that given a data set of size $n$, it is natural to infer the first $n - 2$ moments of $\Lambda$ because they fully capture the signal in the data set.
Example \ref{prior_example} at the end of Section \ref{parameters} provides a family of priors which satisfy the hypotheses of Theorem \ref{ts_consistency}, and whose support can be chosen to contain arbitrarily close approximations to truncated moment sequences of any $\Lambda \in \M_1( [ 0, 1 ] )$.
\vskip 11pt
\begin{rmk}
The hypotheses of Theorem \ref{ts_consistency} are strong, and thus it would be desirable to obtain a posterior contraction rate in addition to just consistency.
In fact, methods akin to that employed in the proof have been extended to provide rates for compound Poisson processes \citep{Gugushvili15} and scalar diffusions on compact intervals \citep{Nickl15}.
However, extending either approach to our setting would require bounds of the form 
\begin{align*}
\| \pi^{ \phi } - \pi^{ \phi_0 } \|_2 \leq C n^{ - \beta }, & & \| \pi^{ \phi } / \pi^{ \phi_0 } - 1 \|_2 \leq \tilde{ C } n^{ - \tilde{ \beta } }
\end{align*}
for some constants $\beta, \tilde{ \beta }, C, \tilde{ C } > 0$.
Since the $\Lambda$-Fleming-Viot stationary density is intractable in nearly all cases, it does not seem possible to extend our approach to obtain rates of posterior consistency.
\end{rmk}

\section{Parametrisation by finitely many moments}\label{parameters}

Consider a set  of type frequencies $\n \in \N^d$ of size $n := \sum_{ i = 1 }^d n_i$ generated by a $\Lambda$-coalescent with finite alleles mutation started from $\psi_n$.
\vskip 11pt
\begin{lem}\label{unidentifiability}
The likelihood satisfies $\mathbf{ P }_{ n }^{ \Lambda }( \n ) = \mathbf{ P }_{ n }^{ \lambda_3, \lambda_4, \ldots, \lambda_n }( \n )$.
That is, $\Lambda$ is conditionally independent of $\n$ given $\{ \lambda_k \}_{ k = 3 }^n$.
\end{lem}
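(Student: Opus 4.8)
The plan is to show that the likelihood $\mathbf{P}_n^\Lambda(\n)$ depends on $\Lambda$ only through the finite collection of merger rates $\{\lambda_{p,k} : 2 \le k \le p \le n\}$, and then to verify that each of these rates is itself a deterministic function of $\lambda_3, \ldots, \lambda_n$. The equality of likelihoods, and its conditional-independence restatement, follow immediately once both facts are in place.

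First I would observe that the $[d]$-labelled $\Lambda$-coalescent started from $\psi_n$ is a jump process whose block count never exceeds $n$: from any configuration with $p \le n$ blocks, each set of $k$ of them merges at rate $\lambda_{p,k}$, while between mergers the inherited types mutate at rate $\theta$ according to $M$ with the root type drawn from $m$. Since $\theta$, $M$ and $m$ are fixed and $\Lambda$-independent, the entire law $\mathbf{P}_n^\Lambda$ of the labelled tree, and hence the marginal probability of any type frequency vector $\n$, is a functional of $\Lambda$ only through the rates $\lambda_{p,k}$ with $2 \le k \le p \le n$. It therefore suffices to express each such rate in terms of $\lambda_3, \ldots, \lambda_n$.

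The key computation is a binomial expansion of $(1-r)^{p-k}$ inside the integral defining $\lambda_{p,k}$. Writing $\mu_j := \int_{[0,1]} r^j \Lambda(dr)$ for the $j$-th moment, one checks directly that $\lambda_k = \mu_{k-2}$ for $k \ge 3$, so identifying $\lambda_3, \ldots, \lambda_n$ is the same as identifying $\mu_1, \ldots, \mu_{n-2}$. For $k \ge 3$ no atom term is present, and expanding yields $\lambda_{p,k} = \sum_{j=0}^{p-k} \binom{p-k}{j}(-1)^j \lambda_{k+j}$, a linear combination of $\lambda_k, \ldots, \lambda_p$, all of which lie in $\{\lambda_3, \ldots, \lambda_n\}$ since $3 \le k \le p \le n$ (each integrand $r^{k-2+j}$ vanishes at $0$, so the restriction to $(0,1]$ is immaterial).

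The one genuinely delicate case, and the step I expect to be the main obstacle to a clean statement, is $k = 2$, where the Kingman atom $\Lambda(\{0\})$ enters $\lambda_{p,2}$ explicitly and is not, a priori, recoverable from the moments $\mu_1, \ldots, \mu_{n-2}$. Here I would use that $\Lambda$ is a probability measure: expanding the integral, its $j=0$ term equals $\Lambda((0,1]) = 1 - \Lambda(\{0\})$, which cancels the explicit atom and leaves $\lambda_{p,2} = 1 + \sum_{j=1}^{p-2}\binom{p-2}{j}(-1)^j \lambda_{j+2}$, again a function of $\lambda_3, \ldots, \lambda_n$ alone. Thus the Kingman component is invisible to the rate family once the total-mass constraint is imposed. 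Combining the two cases, every $\lambda_{p,k}$ with $2 \le k \le p \le n$ is a deterministic function of $(\lambda_3, \ldots, \lambda_n)$, so the map $\Lambda \mapsto \mathbf{P}_n^\Lambda(\n)$ factors through $\Lambda \mapsto (\lambda_3, \ldots, \lambda_n)$, which is exactly the claimed equality and the asserted conditional independence of $\Lambda$ and $\n$ given $\{\lambda_k\}_{k=3}^n$.
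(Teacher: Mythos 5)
Your proof is correct and follows essentially the same route as the paper: the paper cites the standard likelihood recursion to see that $\mathbf{P}_n^{\Lambda}(\n)$ depends on $\Lambda$ only through the rates $\{\lambda_{m,k}\}$, then applies the same binomial expansion and uses $\lambda_2 = 1$ to discard the leading moment, whereas you argue the first step directly from the generator of the labelled coalescent. Your explicit treatment of the $k=2$ case, showing that the Kingman atom $\Lambda(\{0\})$ cancels against the $j=0$ term $\Lambda((0,1]) = 1 - \Lambda(\{0\})$, is the same cancellation the paper absorbs silently into the identity $\lambda_2 = 1$, and spelling it out is a welcome clarification rather than a divergence.
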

\begin{proof}
Let $- q_{ n n } = \sum_{ k = 1 }^{ n - 1 } \binom{ n }{ n - k + 1 } \lambda_{ n, n - k + 1 }$ be the total merger rate of the $\Lambda$-coalescent with $n$ blocks.
It is well known that the $\Lambda$-coalescent likelihood is the unique solution to the recursion \citep{Mohle06, Birkner08, Birkner09b}:
\begin{align}
\mathbf{ P }_{ n }^{ \Lambda }( \n ) = &\frac{ \theta }{ n \theta - q_{ n n } } \sum_{ i, j = 1 }^d ( n_j - 1 + \delta_{ i j } ) M_{ j i } \mathbf{ P }_{ n }^{ \Lambda }( \n - \e_i + \e_j ) \nonumber \\
&+ \frac{ 1 }{ n \theta - q_{ n n } } \sum_{ i : n_i \geq 2 } \sum_{ k = 2 }^{ n_i } \binom{ n }{ k } \lambda_{ n, k } \frac{ n_i - k + 1 }{ n - k + 1 } \mathbf{ P }_{ n - k + 1 }^{ \Lambda }( \n - ( k - 1 )\e_i ), \label{gt}
\end{align}
with boundary condition $\mathbf{ P }_1^{ \Lambda }( \e_i ) = m( i )$.
Repeated application of this recursion yields a closed system of linear equations for the likelihood because all sample sizes on the R.H.S.~are equal to or smaller than the one on the left hand side.
This system is far too large to solve for all but very small sample sizes, but it is clear that the solution can depend on $\Lambda$ only through the polynomial moments $\{ \lambda_{ m, k } \}_{ k \leq m = 2 }^n$.

Polynomial moments can be written as a linear combination of monomial moments:
\begin{equation}
\lambda_{ m, k } = \sum_{ j = 0 }^{ m - k } \binom{ m - k }{ j } ( -1 )^j \lambda_{ k + j }, \label{moment_consistency}
\end{equation}
meaning that only the monomial moments $\{ \lambda_k \}_{ k = 2 }^n$ are required. 
Since $\lambda_2 = \int_0^1 \Lambda( dx ) = 1$, the moments $\{ \lambda_k \}_{ k=3 }^n$ are sufficient.
\end{proof}

Motivated by Lemma \ref{unidentifiability} we make the following definition:
\vskip 11pt
\begin{defn}
Let $\sim_n$ be the equivalence relation on $\M_1( [ 0, 1 ] )$ defined via
\begin{equation*}
\Lambda_1 \sim_n \Lambda_2 \text{ if } \lambda_k^{ ( 1 ) } = \lambda_k^{ ( 2 ) } \text{ for } k \in \{ 3, \ldots, n \}
\end{equation*}
where $\lambda_k^{ ( i ) } := \int_0^1 x^{ k - 2 } \Lambda_i( dx )$.
We call the equivalence classes of $\sim_n$ moment classes of order $n$.
\end{defn}
\vskip 11pt
In view of Lemma \ref{unidentifiability} it is natural to consider the problem of inferring $\Lambda$ from $\n$ in the quotient space $\M_1( [ 0, 1 ] ) / \sim_n$, not in $\M_1( [ 0, 1 ] )$.
Moreover, requiring all linear combinations of the form \eqref{moment_consistency} to be non-negative guarantees a unique solution to the Hausdorff moment problem, so that each completely monotonic moment sequence bounded by 1 corresponds to some $\Lambda \in \M_1( [ 0, 1 ] )$.
Hence we parametrise the space $\M_1( [ 0, 1 ] ) / \sim_n$ by truncated, completely monotonic moment sequences of length $n - 2$ with leading term $\lambda_3 \leq 1$.
This approach yields a compact, finite-dimensional parameter space which nevertheless captures all the signal in the data.
Table \ref{mom_seq} lists some moment sequences corresponding to popular families of $\Lambda$-measures.
\begin{table}[!ht]
\centering
\begin{tabular}{c|cccccc}
$\Lambda$ & $\delta_0$ & $\delta_1$ & $\operatorname{Beta}( 2 - \alpha, \alpha )$ & $U( 0, 1 )$ & $\frac{ 2 }{ 2 + \psi^2 } \delta_0 + \frac{ \psi^2 }{ 2 + \psi^2 } \delta_{ \psi }$ & $c \delta_0 + \frac{ ( 1 - c ) }{ 2 } r dr$\\
\hline
$\lambda_k $ & 0 & 1 & $\frac{ ( 2 - \alpha )_{ k - 2 } }{ ( 2 )_{ k - 2 } }$ & $\frac{ 1 }{ k - 1 }$ & $ \frac{ \psi^k }{ 2 + \psi^2 } $ & $\frac{ 1 - c }{ 2 k }$
\end{tabular}
\caption{Moment sequences of particular $\Lambda$-coalescents. Here $( a )_{ k } := a ( a + 1 ) \ldots ( a + k - 1 )$ denotes the rising factorial.}
\label{mom_seq}
\end{table}

Naturally, the prior $Q$ ought to be chosen to yield a tractable push-forward prior on truncated moment sequences.
These push-forward priors inherit posterior consistency whenever $Q$ satisfies the conditions of Theorem \ref{ts_consistency} because truncated moment sequences can be written as bounded functionals of $\Lambda$.

\section{Prior distributions}\label{priors}

In this section we provide an example family of priors which satisfies the consistency criteria of Theorem \ref{ts_consistency} and has tractable push-forward distributions on truncated moment sequences.
\vskip 11pt
\begin{defn}
Let $Q \in \M_1( \M_1( [ 0, 1 ] ) )$ be a prior distribution for $\Lambda$.
Then the moments $\{ \lambda_k \}_{ k = 3 }^n$ have joint prior $Q_n$ on the space of completely monotonic sequences of length $n - 2$ given by
\begin{equation}\label{moment_prior}
Q_n( \lambda_3 \in dy_3, \ldots, \lambda_n \in dy_n ) := \int_{ \M_1( [ 0, 1 ] ) } \prod_{ k = 3 }^n \mathds{ 1 }_{ \{ dy_k \} }\left( \int_{ ( 0, 1 ] } r^{ k - 2 } \Lambda( dr ) \right) Q( d\Lambda ).
\end{equation}
\end{defn}
The prior $Q$ should to be chosen such that the R.H.S.~of \eqref{moment_prior} is tractable, and the following example illustrates that such a choice is possible.
\begin{ex}\label{prior_example}
\vskip 11pt
Fix $\eta > 0$ and $\alpha \in \M( [ \eta, 1 ] )$ with finite mass and a strictly positive Lebesque density $\alpha( r )$.
Suppose $\D_{ \eta }$ satisfies the conditions of Definition \ref{prior_supp}, and in addition that every $\phi \in \D_{ \eta }$ is continuous.
Let $R( d\tau )$ be a probability measure on $( 0, \infty )$ placing positive mass in all non-empty open sets.
For $x \in [ \eta, 1 ]$ and $\tau > 0$ let 
\begin{equation*}
q_{ x, \tau }( r ) := \frac{ \mathds{ 1 }_{ [ \eta, 1 ] }( r - x ) h_{ \tau }( r - x ) }{ h_{ \tau }( [ \eta, 1 ] ) },
\end{equation*}
where $h_{ \tau }$ is the Gaussian density on $\R$ with mean 0 and variance $\tau^{ -1 }$.

Let $DP( \alpha )$ be the law of a Dirichlet process centred on $\alpha$ \citep{Ferguson73} and let $Q$ be given by the Dirichlet process mixture distribution \citep{Lo84} with mixing distribution $DP( \alpha ) \otimes R$ and mixture components $q_{ x, \tau }$.
In other words, let $F \sim DP( \alpha )$ and $F( r_1 ), F( r_2 ), \ldots$ be the weights of ordered atoms of $F$.
Let $\{ \tau_i \}_{ i =1 }^{ \infty }$ be i.i.d.~draws from $R$. 
Then a draw from $Q$ is given by
\begin{equation*}
\phi( r ) | F, \{ \tau_i \}_{ i = 1 }^{ \infty } = \sum_{ i = 1 }^\infty F( r_i ) q_{ r_i, \tau_i }( r ).
\end{equation*}

The prior $Q$ places full mass on equivalent densities bounded from above and away from 0 by construction.
We assume $Q$ satisfies Lemma \ref{phi_ident}, at which point it remains to check \eqref{kl_cond} to verify $Q$ has posterior consistency.
Theorem 1 of \citep{Bhattacharya12} yields that for any $\delta > 0$ the prior places positive mass in all open balls: $Q( \phi \in \D_{ \eta } : \| \phi - \phi_0 \|_{ \infty } < \delta ) > 0$ for any $\phi_0 \in \D_{ \eta }$ under the above assumptions on $\alpha$, $R$ and $( q_{ x, \tau } )_{ x \in [ \eta, 1 ], \tau > 0 }$.
Now fix $\varepsilon > 0$, $\phi_0 \in \D_{ \eta }$, as well as $0 < \delta < c$ such that
\begin{equation*}
\log\left( \frac{ c }{ c - \delta } \right) \vee \Big| \log\left( \frac{ c }{ c + \delta } \right) \Big| + \frac{ \delta }{ c - \delta } < \eta^2 \varepsilon.
\end{equation*}
Then
\begin{align*}
&Q\left( \phi \in \D_{ \eta } : \int_{ \eta }^1 \left\{ \Big| \log\left( \frac{ \phi_0( r ) }{ \phi( r ) } \right) \Big| + \Big| \frac{ \phi_0( r ) }{ \phi( r ) } - 1 \Big| \right\} r^{ -2 } \phi_0( r ) dr < \varepsilon \right) \\
&\geq Q( \phi \in \D_{ \eta } : \| \phi - \phi_0 \|_{ \infty } < \delta ) > 0,
\end{align*}
because for any $\phi$ satisfying $\| \phi - \phi_0 \|_{ \infty } < \delta$ we have
\begin{align*}
&\int_{ \eta }^1 \left\{ \Bigg| \log\left( \frac{ \phi_0( r ) }{ \phi( r ) } \right) \Bigg| + \Bigg| \frac{ \phi_0( r ) }{ \phi( r ) } - 1 \Bigg| \right\} r^{ -2 } \phi_0( r ) dr \\
&\leq \int_{ \eta }^1 \left\{ \log\left( \frac{ \phi_0( r ) }{ \phi_0( r ) - \delta } \right) \vee \Big| \log\left( \frac{ \phi_0( r ) }{ \phi_0( r ) + \delta } \right) \Big| + \Big| \frac{ \phi_0( r ) }{ \phi_0( r ) - \delta } - 1 \Big| \right\} r^{ -2 } \phi_0( r ) dr \\
&\leq \left( \log\left( \frac{ c }{ c - \delta } \right) \vee \Big| \log\left( \frac{ c }{ c + \delta } \right) \Big| + \frac{ \delta }{ c - \delta } \right) \eta^{-2 } < \varepsilon.
\end{align*}

We now use the machinery of \citet{Regazzini02} to give an explicit system of equations for the distribution function of $Q_n$ under this choice of $Q$.
Define the family of functions 
\begin{equation*}
g_p( x ) := \int_{ \eta }^1 r^p ( c + ( 1 - c( 1 - \eta ) ) q_{ x, \tau }( r ) ) dr
\end{equation*}
for $p \in \N$ and $x \in [ \eta, 1 ]$, as well as the vectors $\g_n( x ) := ( g_1( x ), g_2( x ), \ldots, g_n( x ) )$ and $\s_n := ( s_1, \ldots, s_n ) \in \R^n$.
For brevity, for a measure $\nu$ and a function $f$ let $\nu( f ) := \int f d\nu$ whenever the integral exists.

Let $\gamma_{ \alpha }$ be a Gamma random measure with parameter $\alpha$, that is, a random finite measure on $[ \eta, 1 ]$ such that for any measurable partition $\{ A_1, \ldots, A_n \}$ the random variables $( \gamma_{ \alpha }( A_1 ), \ldots, \gamma_{ \alpha }( A_n ) )$ are independent and gamma distributed with common scale parameter 1 and respective shape parameters $\alpha( A_k )$.
Let 
\begin{equation*}
h_n( \s_n; \g_n; \alpha ) := \E\left[  \exp\left( i \s_n \cdot \gamma_{ \alpha }( \g_n ) \right) \right]
\end{equation*}
be the characteristic function of $\gamma_{ \alpha }( \g_n ) := ( \gamma_{ \alpha }( g_1 ), \ldots, \gamma_{ \alpha }( g_n ) )$.
Note that $h_n( \s_n; \g_n; \alpha ) = h_n( \mathbf{ 1 }; \s_n \cdot \g_n; \alpha )$ and by \citep[Proposition 10]{Regazzini02}
\begin{equation}\label{h_char}
h_n( \s_n; \g_n; \alpha ) = \exp\left( - \int_0^1 \log( 1 - i \s_n \cdot \g_n ) d\alpha \right).
\end{equation}
Now let $F_n( \bm{\sigma}, \g_n, \alpha )$ be the joint distribution function of $F( \g_n ) := ( F( g_1 ), \ldots, F( g_n ) )$ under $DP( \alpha )$.
The following trick was introduced in \citep[equation (2.9)]{Hannum81}:
\begin{equation*}
F_n( \bm{ \sigma }, \g_n, \alpha ) = F_n( \mathbf{ 0 }, \gamma_{ \alpha }( \g_n - \bm{ \sigma } ), \alpha )
\end{equation*}
for any $\bm{ \sigma } \in \R^n$, so that it is sufficient to invert $h_n$ at the origin to obtain $F_n$.
This can be done using the multidimensional version of the Gurland inversion formula \citep[Theorem 3]{Gurland48}:

Let $C_0, \ldots, C_n \in \R^{ n + 1 }$ solve 
\begin{align}
C_n &= -1 \nonumber\\
\sum_{ k = 0 }^{ n - r - 1 } \binom{ n - r }{ k } C_{ r + k } &= 1 \text{ for } r \in \{ 0, \ldots, n - 1 \} \label{c_ks} .
\end{align}
Then
\begin{align*}
( -1 )^{ n + 1 } 2^n &F_n( \bm{ \sigma }, \g_n, \alpha ) = C_0 \\
&+ \sum_{ k = 1 }^n \frac{ C_k }{ ( \pi i )^k } \sum_{ 1 \leq j_1 < \ldots < j_k \leq n } \int_0^{ \infty } \ldots \int_0^{ \infty } \frac{ h_k( \s_k; g_{ j_1 } - \sigma_{ j_1 }, \ldots, g_{ j_k } - \sigma_{ j_k }; \alpha ) }{ s_1 \times \cdots \times s_k } d\s_k.
\end{align*}
The characteristic functions $h_k$ and the constants $C_k$ can be computed from \eqref{h_char} and \eqref{c_ks} respectively, so that the R.H.S.~can be evaluated numerically for practical applications.
Numerical methods are discussed in \citep[Section 6]{Regazzini02}.

Finally, we demonstrate that the restrictive assumptions of Theorem \ref{ts_consistency} still allow inference for broad classes of moment sequences with arbitrarily small approximation errors.
Let $\beta( r )$ be any non-negative probability density on $[ 0, 1 ]$, and define the truncation $\bar{ \beta }( r ) := \kappa( \eta, c, K )^{ -1 } ( \beta( r ) \vee c \wedge K )$, where $\kappa$ is the normalising constant
\begin{equation*}
\kappa( \eta, c, K ) := \int_{ \eta }^1 \beta( r ) - ( \beta( r ) - K )^+ + ( c - \beta( r ) )^+ dr.
\end{equation*}
Note that $Q( \phi \in \D_{ \eta } : \| \phi - \bar{ \beta } \|_{ \infty } < \delta ) > 0$ for any $\delta > 0$, and fix such a $\phi$.
Now consider the error on the $k^{\text{th}}$ moment:
\begin{align*}
&\Big| \int_0^1 r^k \beta( r ) dr - \int_{ \eta }^1 r^k \phi( r )dr \Big| \\
&\leq \beta( ( 0, \eta ) ) + ( 1 - \eta ) \delta +  \frac{ ( \kappa( \eta, c, K ) - 1 ) \beta( [ \eta, 1 ] ) + c( 1 - \eta ) }{ \kappa( \eta, c, K ) } + \int_{ \eta }^1 \frac{ ( \beta( r ) - K )^+ }{ \kappa( \eta, c, K ) } dr.
\end{align*}
Each term on the R.H.S.~can be made small by choosing $\eta$, $c$ and $\delta$ sufficiently small, and $K$ sufficiently large because $\kappa \rightarrow 1$ as $\eta \rightarrow 0$, $c \rightarrow 0$ and $K \rightarrow \infty$, and 
\begin{equation*}
\int_{ \eta }^1 ( \beta( r ) - K )^+ dr = 1 - \int_{ \eta }^1 \beta( r ) \wedge K dr \rightarrow \beta( ( 0, \eta ) )
\end{equation*}
as $K \rightarrow \infty$ by the Monotone Convergence Theorem.
A further approximation step also enables consideration of atoms by choosing $\beta$ which places all of its mass in neighbourhoods of the desired locations for atoms.
Hence it is possible to ensure the support of $Q$ extends arbitrarily close to any desired moment sequences despite the restrictive assumptions on $\D_{ \eta }$ in Theorem \ref{ts_consistency}.
\end{ex}

\section{Robust bounds on functionals of \texorpdfstring{$\Lambda$}{Lambda}}\label{positive_results}

Having established consistency criteria for the posterior and a finite parametrisation via $n - 2$ leading moments, we now turn to what can be said about $\Lambda$ based on inferring these moments.
It would be ideal if the diameter of moment classes shrunk with increasing $n$, as then it would be possible to fix a representative $\Lambda \in \M_1( [ 0, 1 ] )$ with specified $n - 2$ leading moments and control the remaining within-moment-class error.
In Theorem \ref{moment_class_thm} we show that such shrinking does not happen, and devote the remainder of the section to presenting quantities which can be controlled based on $n - 2$ moments alone.
We begin by recalling some standard results from the theory of orthogonal polynomials.
\vskip 11pt
\begin{defn}
Suppose $n$ is odd.
Let $m := \frac{ n - 3 }{ 2 }$ and $\{ \phi_k \}_{ k = 0 }^{ m }$ be the first $m + 1$ $\Lambda$-orthogonal polynomials.
Let $\{ \xi_k \}_{ k = 1 }^m$ be the zeros of $\phi_m$.
\vskip 11pt
\end{defn}
\begin{rmk}
It is a standard result that $\{ \phi_k \}_{ k = 0 }^{ m - 1 }$ and $\{ \xi_k \}_{ k = 1 }^m$ are constant within moment classes of order at least $n$.
\end{rmk}
The following bounds on $\Lambda$ in terms of its leading $n - 2$ moments are classical:
\vskip 11pt
\begin{lem}[Chebyshev-Markov-Stieltjes (CMS) inequalities]
Define
\begin{equation*}
\rho_{ m - 1 }( z ) := \left( \sum_{ k = 0 }^{ m - 1 } | \phi_k( z ) |^2 \right)^{ -1 }.
\end{equation*}
Then the following inequalities are sharp:
\begin{equation*}
\Lambda( [ 0, \xi_j ] ) \leq \sum_{ k = 1 }^j \rho_{ m - 1 }( \xi_k ) \leq \Lambda( [ 0, \xi_{ j + 1 } ) ) \text{ for } j = \{ 1, \ldots, m \},
\end{equation*}
where $\xi_{ m + 1 } := 1$.
\end{lem}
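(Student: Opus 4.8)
The plan is to recognise the quantities in the statement as the data of the $m$-point Gaussian quadrature rule attached to $\Lambda$, and to sandwich the indicator of an interval between polynomials of degree at most $2m-1$. First I would record the two standard facts about Gauss quadrature. The nodes $\xi_1 < \dots < \xi_m$ are the simple interior zeros of $\phi_m$, and the numbers $w_k := \rho_{m-1}(\xi_k)$ are the associated Christoffel numbers; by the reproducing-kernel identity $w_k = 1/\sum_{l=0}^{m-1}|\phi_l(\xi_k)|^2$ they are strictly positive and coincide with the Gauss weights $\int \ell_k \, d\Lambda$, where $\ell_k$ is the Lagrange fundamental polynomial at $\xi_k$. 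Moreover the rule is exact on polynomials of degree at most $2m-1$: writing any such $p$ as $q\phi_m + s$ with $\deg q, \deg s \le m-1$ and using orthogonality of $\phi_m$ together with $\phi_m(\xi_k)=0$ gives $\int p \, d\Lambda = \sum_{k=1}^m w_k \, p(\xi_k)$. Since the nodes and weights are determined by the moments of $\Lambda$ up to order $2m-1 = n-4$, the Remark preceding the lemma applies and every object here is constant on moment classes of order $n$, so the bounds are genuine functions of the leading $n-2$ moments.

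Second, the two inequalities follow from one-sided polynomial interpolation of step functions. For the upper bound I would construct a polynomial $\overline{P}$ of degree at most $2m-1$ with $\overline{P} \ge \mathds{1}_{[0,\xi_j]}$ on $[0,1]$ and $\overline{P}(\xi_k) = \mathds{1}_{\{k \le j\}}$; imposing double zeros of $\overline{P}$ at the nodes $\xi_{j+1}, \dots, \xi_m$ (so that $\overline{P}$ is nonnegative and tangent to $0$ on the flat side of the jump) and the value $1$ at $\xi_1, \dots, \xi_j$, then fixing the remaining free derivatives so that $\overline{P} \ge 1$ on $[0,\xi_j]$, yields such an interpolant, and exactness gives
\[
\Lambda([0,\xi_j]) = \int \mathds{1}_{[0,\xi_j]} \, d\Lambda \le \int \overline{P} \, d\Lambda = \sum_{k=1}^j w_k.
\]
Symmetrically, for the lower bound I would build $\underline{P}$ of degree at most $2m-1$ with $\underline{P} \le \mathds{1}_{[0,\xi_{j+1})}$ and $\underline{P}(\xi_k) = \mathds{1}_{\{k \le j\}}$ (note $\mathds{1}_{[0,\xi_{j+1})}(\xi_{j+1}) = 0$, which is exactly what decouples the two targets), whence
\[
\sum_{k=1}^j w_k = \int \underline{P} \, d\Lambda \le \Lambda([0,\xi_{j+1})).
\]
The convention $\xi_{m+1} := 1$ covers the final index.

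For sharpness I would exhibit the Gaussian quadrature measure $\mu^\ast := \sum_{k=1}^m w_k \, \delta_{\xi_k}$. It shares with $\Lambda$ all the moments that enter the construction of the nodes and weights, namely those up to order $2m-1$, and it attains equality in both inequalities simultaneously, since $\mu^\ast([0,\xi_j]) = \mu^\ast([0,\xi_{j+1})) = \sum_{k=1}^j w_k$. Hence neither constant can be improved using only that moment information, which is the relevant notion of sharpness.

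The main obstacle is the construction of the one-sided interpolants $\overline{P}$ and $\underline{P}$ and, above all, the verification that each lies on the correct side of the step function throughout $[0,1]$ rather than merely at the nodes. This is the classical Markov--Stieltjes content: one imposes double zeros at the nodes on the flat side of the jump to force nonnegativity and tangency there, and pins the remaining free derivatives at the other nodes precisely so as to force the one-sided bound, the global sign then being confirmed by a Rolle-type argument applied separately on each side of the jump. Rather than reproduce this delicate step I would cite the standard construction (e.g.\ Szeg\H{o}, \S3.41, or an equivalent canonical-moments reference) and assume the classical existence of the orthogonal polynomials, their interlacing real simple zeros, and the Christoffel--Darboux identity.
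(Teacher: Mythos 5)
The paper does not prove this lemma at all: it is introduced with the sentence ``The following bounds on $\Lambda$ in terms of its leading $n-2$ moments are classical'' and used as a black box in the proof of Theorem \ref{moment_class_thm}. Your argument is the standard classical proof, and its structure is correct: Gauss quadrature with nodes at the zeros of $\phi_m$ and Christoffel weights $\rho_{m-1}(\xi_k)$ (note this formula for the weights presumes the $\phi_k$ are orthonormal, which is how the paper's definition of $\rho_{m-1}$ must be read), exactness on polynomials of degree at most $2m-1$ via the division $p = q\phi_m + s$, and one-sided Hermite interpolants of the step functions $\mathds{1}_{[0,\xi_j]}$ and $\mathds{1}_{[0,\xi_{j+1})}$ sandwiching the indicator and agreeing with it at the nodes. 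Deferring the delicate verification that the interpolants stay on the correct side of the jump to Szeg\H{o} \S 3.41 is reasonable for a result the paper itself treats as classical. Two points are worth making explicit if you write this up. First, your sharpness witness $\mu^\ast = \sum_k \rho_{m-1}(\xi_k)\delta_{\xi_k}$ (the lower principal representation) matches $\Lambda$ only on moments up to order $2m-1 = n-4$, not on the full set of $n-2$ moments defining the paper's moment classes of order $n$; you correctly frame sharpness relative to the moment information that actually determines the nodes and weights, but this distinction should be stated, since the surrounding text of the paper is organised around $\sim_n$-equivalence. Second, your count of constraints for $\overline{P}$ leaves $j$ free parameters unaccounted for; the standard construction resolves this by also imposing double interpolation ($\overline{P} = 1$, $\overline{P}' = 0$) at the nodes $\xi_1,\ldots,\xi_{j-1}$ on the other side of the jump, with only the jump node itself carrying a single condition --- as written, ``fixing the remaining free derivatives so that $\overline{P}\ge 1$'' gestures at this but does not pin it down.
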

We are now in a position to prove the following theorem:
\vskip 11pt
\begin{thm}\label{moment_class_thm}
For any $n \in \N$ and any completely monotonic sequence of moments $\{ \lambda_{ k, k } \}_{ k = 3 }^n$ with $\lambda_{3,3} \leq 1$ there exist uncountably many measures $\Lambda \in \M_1( [ 0, 1 ] )$, all with leading moments $\{ \lambda_{ k, k } \}_{ k = 3}^n$ and all satisfying the CMS inequalities, such that for any pair $\Lambda_x$ and $\Lambda_y$ of them we have $d_{ TV }( \Lambda_x, \Lambda_y ) = 2$, where $d_{ TV }$ is the total variation distance.
\end{thm}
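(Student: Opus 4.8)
The plan is to restate the claim as a problem in the Hausdorff moment problem and then write down an explicit uncountable family of finitely supported measures with pairwise disjoint supports. Setting $N := n - 2$ and $m_j := \int_0^1 x^j \Lambda(dx)$, prescribing the leading moments $\{\lambda_{k,k}\}_{k=3}^n$ is exactly prescribing $m_1, \ldots, m_N$ together with the normalisation $m_0 = 1$, a system of $N+1$ linear constraints I will call $(\star)$. Two preliminary remarks dispose of the secondary parts of the statement. First, the CMS inequalities hold for every measure representing a given moment sequence, so any $\Lambda$ I construct within the moment class obeys them automatically and that clause costs nothing. Second, $d_{TV}(\Lambda_x, \Lambda_y) = 2$ is equivalent to $\Lambda_x$ and $\Lambda_y$ being mutually singular, so it is enough to produce uncountably many representing measures whose supports are pairwise disjoint. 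I would carry out the construction for moment sequences lying in the interior of the moment space (equivalently, with strictly positive Hankel determinants); the boundary sequences, such as the all-zero one that forces $\Lambda = \delta_0$, possess a unique representing measure and so must be read as excluded from the uncountability count.

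First I would fix a starting representation on $N+1$ nodes. Because $(m_1, \ldots, m_N)$ is interior, there exist distinct nodes $0 < z_1^0 < \cdots < z_{N+1}^0 < 1$ for which the interpolatory weights --- the unique solution $w^0$ of the square Vandermonde system $\sum_{i=1}^{N+1} w_i^0 (z_i^0)^j = m_j$, $j = 0, \ldots, N$ --- are all strictly positive, giving a representing measure $\mu_0 = \sum_{i=1}^{N+1} w_i^0 \delta_{z_i^0} \in \M_1([0,1])$ that satisfies $(\star)$. I would then translate the nodes rigidly: for $s \geq 0$ put $z_i(s) := z_i^0 + s$ and let $w(s)$ be the unique Vandermonde solution of $\sum_{i=1}^{N+1} w_i(s) z_i(s)^j = m_j$, $j = 0, \ldots, N$, so that $\mu_s := \sum_{i=1}^{N+1} w_i(s) \delta_{z_i(s)}$ automatically satisfies $(\star)$ for every $s$. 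Since $w(0) = w^0$ has strictly positive entries and $s \mapsto w(s)$ is continuous, I can pick $s_0 > 0$ --- chosen also below $\min_i(z_{i+1}^0 - z_i^0)$ and below $1 - z_{N+1}^0$ --- so that, for all $s \in [0, s_0]$, the weights $w(s)$ stay positive and the nodes stay in $(0,1)$; then each $\mu_s$ is a bona fide probability measure inside the moment class.

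Finally I would read off disjointness. For $s \neq s'$ each coordinate $s \mapsto z_i(s)$ is injective, and the ranges $\{z_i(s): s \in [0,s_0]\}$ for different $i$ sit in disjoint intervals because $s_0$ is smaller than the minimal initial gap; hence $\operatorname{supp}(\mu_s) \cap \operatorname{supp}(\mu_{s'}) = \emptyset$, so $\mu_s \perp \mu_{s'}$ and $d_{TV}(\mu_s, \mu_{s'}) = 2$. The uncountable family $\{\mu_s\}_{s \in [0,s_0]}$ then consists of measures all obeying $(\star)$ and the CMS inequalities but pairwise totally singular, which is exactly the assertion. The hard part will be the very first step: establishing that an interior moment sequence admits a positive-weight representation on $N+1$ distinct nodes (a Carath\'eodory/positive-quadrature statement for the moment curve $z \mapsto (1, z, \ldots, z^N)$), and being careful to exclude the degenerate boundary sequences whose moment class is a single point. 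Once that starting measure is secured, the translation-and-continuity argument and the disjointness bookkeeping are routine.
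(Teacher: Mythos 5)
Your argument is correct in outline but takes a genuinely different route from the paper's. The paper works with the zeros $\xi_1 < \cdots < \xi_m$ of the $m$-th orthogonal polynomial and the Christoffel numbers $\zeta_j = \rho_{m-1}(\xi_j)$: it rewrites the CMS inequalities as bounds on the interval masses $\Lambda([\xi_j,\xi_{j+1}))$ and then exhibits two interlacing assignments of these masses (one charging the even-indexed intervals, one the odd-indexed ones), both summing to $1$, obeying the bounds, and having disjoint supports. That construction is lighter -- it never solves for weights -- but it only pins down interval masses rather than the measures' moments, and it explicitly displays just two mutually singular measures, leaving uncountability to the freedom of placing mass within intervals. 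Your construction does more work and proves more: solving the square Vandermonde system at $N+1$ nodes genuinely enforces all of $m_0,\dots,m_N$, and the rigid translation $z_i \mapsto z_i+s$ over an interval shorter than the minimal node gap yields an honest uncountable, pairwise mutually singular family inside the exact moment class. Your observations that the CMS inequalities are automatic for any representing measure, and that degenerate boundary sequences (unique representing measure) must be excluded for the statement to hold as written, are both correct; the paper leaves the latter implicit in assuming the orthogonal polynomials and their $m$ distinct zeros exist.

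The one substantive debt is the step you flag yourself: the existence, for an interior truncated moment sequence, of a representation on exactly $N+1$ distinct nodes in $(0,1)$ with strictly positive interpolatory weights. This is true but not free. It can be discharged via canonical representations: the lower and upper principal representations of an interior sequence have strictly interlacing, hence disjoint, supports, so their average is a positive-weight representation on roughly $N+1$ or $N+2$ nodes; some parity bookkeeping, and a passage to nearby canonical representations with free interior nodes to avoid atoms at $0$ or $1$ (an endpoint atom at $1$ would be pushed out of $[0,1]$ by your rightward translation), is still needed to land on exactly $N+1$ interior nodes so that your Vandermonde system is square. Until that lemma is written out the proof is incomplete, but the strategy is sound, and it has the merit of verifying the moment constraints rather than only the CMS envelope.
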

\begin{proof}
It will be convenient to write the CMS inequalities in the following, equivalent form:
\begin{align*}
0 &\leq \Lambda( [ 0, \xi_1) ) \leq \rho_{ m - 1 }( \xi_1 ) \\
0 &\leq \Lambda( [ \xi_j, \xi_{ j + 1 } ) ) \leq \rho_{ m - 1 }( \xi_j ) + \rho_{ m - 1 }( \xi_{ j + 1 } ) \text{ for } j \in \{ 1, \ldots, m - 1 \} \\
0 &\leq \Lambda( [ \xi_m, 1 ] ) \leq \rho_{ m - 1 }( \xi_m )
\end{align*}
where the last inequality follows from the fact that $\sum_{ k = 1 }^m \rho_{ m - 1 }( \xi_k ) = 1$.
This equality holds because $\sum_{ k = 1 }^m \rho_{ m - 1 }( \xi_k )$ is the sum of all order $m$ Gauss quadrature weights, or equivalently the quadrature applied to the constant function 1, which is a polynomial of degree 0.
The equality follows by recalling that Gauss quadrature is exact for polynomials of order up to $2 m - 1$.

Now let the measures $\Lambda_x$ and $\Lambda_y$ be described by sequences of $( m + 1 )$ weights $( x_0, x_1, \ldots, x_m )$ and $( y_0, y_1, \ldots, y_m )$, with the $j^{\text{th}}$ weight denoting the mass that the corresponding measure places in the interval $[ \xi_j, \xi_{ j + 1 } )$ (with obvious adjustments for the right hand boundary terms).

For brevity let $\zeta_j := \rho_{ m - 1 }( \xi_j )$.
Suppose first that $m$ is odd, and let the vectors of weights be given as
\begin{align*}
( x_0, x_1, x_2, x_3, x_4, \ldots, x_{ m - 1 }, x_m ) &= ( \zeta_1, 0, \zeta_2 + \zeta_3, 0, \zeta_4 + \zeta_5, \ldots, 0, \zeta_m ) \\
( y_0, y_1, y_2, y_3, y_4, \ldots, y_{ m - 1 }, y_m ) &= ( 0, \zeta_1 + \zeta_2, 0, \zeta_3 + \zeta_4, 0, \ldots, \zeta_{ m - 1 } + \zeta_m, 0 )
\end{align*}
Both measures have total mass $\sum_{ j = 1 }^m \zeta_j = 1$, and the interlacing masses have no overlap so $d_{ TV }( \Lambda_x, \Lambda_y ) = 2$.
The case where $m$ is even is similar.
\end{proof}
\begin{rmk}
The same result holds in Kullback-Leibler divergence due to Pinsker's inequality: for probability measures $P$ and $Q$ such that $P \sim Q$, and letting $\operatorname{KL}( P, Q ) := \E_P[ \log( dP / dQ ) ]$ we have
\begin{equation*}
d_{ TV }( P, Q ) \leq \sqrt{ \frac{ 1 }{ 2 } \operatorname{KL}( P, Q ) },
\end{equation*}
so that $\operatorname{KL}( \Lambda_x, \Lambda_y ) \geq 8$ for $\Lambda_x$ and $\Lambda_y$ as in Theorem \ref{moment_class_thm}.
\end{rmk}
Despite this seemingly disappointing result, it is possible to make some conclusions about $\Lambda$ based on $n - 2$ moments.
For example, the Kingman hypothesis can be tested in a robust way by checking whether the vector $( 0, 0, \ldots, 0 )$ lies in a desired credible region of  the posterior $Q_n( \cdot | \n )$, and the plausibility of any other $\Lambda$ of interest can be assessed similarly.
More generally, it is possible to maximise/minimise a certain class of functionals subject to moment constraints obtained from a credible region to obtain robust bounds for quantities of interest.
We begin by recalling some relevant definitions.
\vskip 11pt
\begin{defn}\label{cm_def}
Let $m, n \in \N$, and fix $\R$-valued constants $\{ c_k \}_{ k = 1 }^m$, a sequence $\{ i_k \}_{ k = 1 }^m$ of $\{ 3, \ldots, n \}$-valued indices and a binary sequence $\{ j_k \}_{ k = 1 }^m$ of zeros and ones.
Let
\begin{equation}\label{confidence}
\C_m := \left\{ \Lambda \in \M_1( [ 0, 1 ] ) : (-1)^{ j_k } \lambda_{ i_k } \leq c_k \text{ for } k \in \{ 1, \ldots, m \} \right\}
\end{equation}
be a subset of $\M_1( [ 0, 1 ] )$ with leading $n - 2$ moments in a desired region specified by $m$ linear inequalities.
Let $\ext( \C_m )$ be the extremal points in $\C_m$, i.e. those which cannot be written as non-trivial convex combinations of elements in $\C_m$, and
\begin{equation*}
\C_{ D } := \left\{ \nu \in \C_m : \nu = \sum_{ k = 1 }^p w_k \delta_{ x_k } \text{ where } 1 \leq p \leq m + 1, w_k \geq 0, x_k \in [ 0, 1 ] \text{ and } \sum_{ k = 1 }^p w_k = 1 \right\}
\end{equation*}
be the set of discrete probability measures on $[ 0, 1 ]$ with at most $m + 1$ atoms.
\end{defn}
\vskip 11pt
\begin{ex}
The extremal points of $\M_1( [ 0, 1 ] )$ are the Dirac measures:
\begin{equation*}
\ext\left( \M_1( [ 0, 1 ] ) \right) = \{ \delta_x : x \in [ 0, 1 ] \}.
\end{equation*}
\end{ex}

For our purposes $\C_m$ should be thought of as an envelope containing a desired credible region of truncated, completely monotonic moment sequences expressed using finitely many linear constraints.
We postpone discussion of how an approximate credible region can be obtained to the next section, and simply assume one is available.
The importance of Definition \ref{cm_def} is that maxima and minima of certain functionals of $\Lambda$ coincide on $\C_m$ and $\C_D$, and that $\C_D$ is finite-dimensional so that these extrema can be found numerically.
The class of functionals for which this can be done is given below.
\vskip 11pt
\begin{defn}\label{measure_affine}
The functional $F: \C \mapsto \overline{ \R }$ is \emph{measure-affine} if, for every $\nu \in \C$ and $p \in \M_1( \ext( \C ) )$ such that $\nu( E ) = \int_{ \ext( \C ) } \gamma( E ) p( d \gamma )$ for every $E \in \B( [ 0, 1 ] )$, $F$ is $p$-integrable and 
\begin{equation*}
F( \nu ) = \int_{ \ext( \C ) } F( \gamma ) p( d\gamma ).
\end{equation*}
\end{defn}
Intuitively, $\nu$ is a barycentre of $\C$ with weights on extremal points given by $p$, and $F$ is measure-affine if it commutes with the operation of expressing $\nu$ as the weighted sum of extremal points.
If $\C$ consists of finitely many points, this definition coincides with the usual definition of affine functions.

The following two results originate from \citep[Proposition 3.1 and Theorem 3.2]{Winkler88}:
\vskip 11pt
\begin{lem}\label{affine_lemma}
If $q: [ 0, 1 ] \mapsto \overline{ \R }$ is bounded on at least one side then $F: \nu \mapsto \E_{ \nu }[ q ]$ is measure-affine. 
\end{lem}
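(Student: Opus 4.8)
The plan is to exploit that $F(\nu) = \E_\nu[q] = \int_{[0,1]} q\,d\nu$ is linear in $\nu$, and that the barycentric hypothesis $\nu(E) = \int_{\ext(\C)}\gamma(E)\,p(d\gamma)$ is exactly this linearity at the level of indicator integrands. The one-sided bound on $q$ is precisely what lets me push this identity from indicators to $q$ itself by monotone convergence, with no integrability assumption required. Without loss of generality I would assume $q \geq -M$ for some $M \in [0,\infty)$; the case in which $q$ is bounded above follows by applying the result to $-q$. The target identity is then
\[
\int_{[0,1]} q\,d\nu = \int_{\ext(\C)}\Big(\int_{[0,1]} q\,d\gamma\Big)\,p(d\gamma),
\]
which is precisely $F(\nu) = \int_{\ext(\C)} F(\gamma)\,p(d\gamma)$.

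The first step is to establish the mixture identity
\[
\int_{[0,1]} f\,d\nu = \int_{\ext(\C)}\Big(\int_{[0,1]} f\,d\gamma\Big)\,p(d\gamma)
\]
for every non-negative measurable $f:[0,1]\to[0,\infty]$. For $f = \mathds{1}_E$ this is the barycentric hypothesis itself, which in particular presupposes that $\gamma\mapsto\gamma(E)$ is $p$-measurable; linearity extends it to non-negative simple $f$, and the Monotone Convergence Theorem---applied to an increasing sequence of simple functions both inside the inner $\gamma$-integral and in the outer $p$-integral---extends it to all non-negative measurable $f$. The measurability of $\gamma\mapsto\int f\,d\gamma$ is obtained along the way.

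Applying this to $f = q + M \geq 0$, and using that $\nu$ and each $\gamma$ are probability measures so that $\int M\,d\nu = M = \int_{\ext(\C)}\big(\int M\,d\gamma\big)\,p(d\gamma)$, the finite constants cancel and the target identity follows. Since $F(\gamma) = \int q\,d\gamma \geq -M$ is uniformly bounded below, the right-hand integral is well defined in $\overline{\R}$ and $F$ is $p$-integrable in the sense of Definition \ref{measure_affine}. The only delicate point---and the step I expect to be the crux---is the interchange of the two integrals when the integrand may equal $+\infty$: one-sided boundedness of $q$ is exactly the hypothesis that makes the Tonelli/monotone-convergence argument go through without assuming $q$ integrable, and the subtraction of the finite constant $M$ is legitimate precisely because every measure in sight is a probability measure.
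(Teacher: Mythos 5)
Your argument is correct: extending the barycentric identity from indicators to simple functions to non-negative measurable functions via monotone convergence, and then handling one-sided boundedness by adding and cancelling a finite constant (legitimate because $\nu$, each $\gamma$, and $p$ are probability measures), is exactly the standard proof of this fact. The paper itself gives no proof but defers to \citet[Proposition 3.1]{Winkler88}, whose argument is essentially the one you have written out, so your proposal matches the intended route; the only cosmetic caveat is that when $\int q\,d\gamma=+\infty$ on a set of positive $p$-measure, ``$p$-integrable'' in Definition \ref{measure_affine} must be read as quasi-integrability in $\overline{\R}$, as you note.
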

\vskip 11pt
\begin{lem}\label{optimisation}
Let $\C_m$ be as in Definition \ref{cm_def} and $F: \C_m \mapsto \overline{ \R }$ be measure-affine.
Then
\begin{align}
\inf_{ \nu \in \C_m } F( \nu ) &= \inf_{ \nu \in \C_{ D } } F( \nu ) \label{inf_opt} \\
\sup_{ \nu \in \C_m } F( \nu ) &= \sup_{ \nu \in \C_{ D } } F( \nu ). \label{sup_opt}
\end{align}
\end{lem}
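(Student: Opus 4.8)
The plan is to combine a Choquet-type barycentric representation with a characterisation of the extreme points of $\C_m$ as discrete measures with few atoms. The two easy inclusions come for free: since every element of $\C_D$ is a probability measure whose leading $n-2$ moments satisfy the constraints in \eqref{confidence}, we have $\C_D \subseteq \C_m$, and hence $\inf_{\nu \in \C_m} F(\nu) \leq \inf_{\nu \in \C_D} F(\nu)$ and $\sup_{\nu \in \C_m} F(\nu) \geq \sup_{\nu \in \C_D} F(\nu)$. It therefore remains to establish the reverse inequalities.

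First I would fix the topological framework. Because $[0,1]$ is compact metric, $\M_1([0,1])$ equipped with the weak topology is compact and metrisable. Each moment map $\Lambda \mapsto \lambda_{i_k} = \int_{[0,1]} r^{i_k - 2}\Lambda(dr)$ is weakly continuous, since the integrand is bounded and continuous, so each constraint $(-1)^{j_k}\lambda_{i_k} \leq c_k$ defines a weakly closed set. Consequently $\C_m$ is a closed, convex, hence compact and metrisable, subset of $\M_1([0,1])$. I would then invoke Choquet's theorem on $\C_m$: every $\nu \in \C_m$ is the barycentre of a probability measure $p \in \M_1(\ext(\C_m))$, i.e. $\nu(E) = \int_{\ext(\C_m)} \gamma(E) p(d\gamma)$ for every $E \in \B([0,1])$. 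Since $F$ is measure-affine (Definition \ref{measure_affine}), this yields $F(\nu) = \int_{\ext(\C_m)} F(\gamma) p(d\gamma)$, so $F(\nu)$ is sandwiched between $\inf_{\gamma \in \ext(\C_m)} F(\gamma)$ and $\sup_{\gamma \in \ext(\C_m)} F(\gamma)$. Taking the infimum and supremum over $\nu \in \C_m$, and using $\ext(\C_m) \subseteq \C_m$, gives $\inf_{\nu \in \C_m} F(\nu) = \inf_{\gamma \in \ext(\C_m)} F(\gamma)$ and the analogous identity for the supremum.

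The crux, and the step I expect to be the main obstacle, is to show $\ext(\C_m) \subseteq \C_D$. This is the extreme-point characterisation for moment sets underlying Winkler's results: an extreme point $\gamma$ of $\C_m$ at which exactly $j \leq m$ of the constraints in \eqref{confidence} are active must be an extreme point of the set of probability measures with $j$ fixed moment equalities, and a Richter--Rogosinski / Carath\'eodory-type counting argument shows that such extreme points are supported on at most $j+1 \leq m+1$ atoms. Hence every extreme point of $\C_m$ lies in $\C_D$. Combining this with the displayed identity and the easy inclusion $\C_D \subseteq \C_m$ produces the chain $\inf_{\nu \in \C_m} F(\nu) = \inf_{\gamma \in \ext(\C_m)} F(\gamma) \geq \inf_{\nu \in \C_D} F(\nu) \geq \inf_{\nu \in \C_m} F(\nu)$, forcing equality throughout and proving \eqref{inf_opt}; the argument for \eqref{sup_opt} is identical.

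The delicate points are twofold. The first is ensuring that Choquet's representation and measure-affineness interact cleanly, in particular that $F$ is $p$-integrable so that the barycentric identity $F(\nu) = \int F(\gamma)p(d\gamma)$ is legitimate; this is built into Definition \ref{measure_affine} and is guaranteed for the functionals of interest by Lemma \ref{affine_lemma}. The second, and genuinely technical, point is carrying out the extreme-point count carefully when some constraints are active and others slack, which is exactly the content we would import from \citep{Winkler88}.
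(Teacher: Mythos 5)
Your proposal is correct and follows essentially the same route as the paper, which offers no independent argument but imports the result directly from \citet{Winkler88} (Proposition 3.1 and Theorem 3.2); your Choquet barycentric representation combined with the extreme-point count $\ext(\C_m) \subseteq \C_D$ is precisely the content of the cited results. The only step you leave to Winkler --- the Richter--Rogosinski/Carath\'eodory counting argument showing extreme points of $\C_m$ have at most $m+1$ atoms --- is exactly the step the paper also delegates to that reference, so nothing is missing relative to the paper's treatment.
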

The purpose of Lemma \ref{optimisation} is that the optimisation problems on the R.H.S.~of \eqref{inf_opt} and \eqref{sup_opt} are finite-dimensional and can be solved numerically.
Hence tight bounds for measure-affine functionals $F( \Lambda )$ over credibility regions can be computed in an assumption-free manner.

In order to specify $\C_m$ it remains to be able to approximate the posterior, which we achieve via MCMC.
This will be detailed in the next section.
Before that, we conclude this section with a simple example computation.
\vskip 11pt
\begin{ex}
Suppose a posterior credible region is specified via two linear constraints as
\begin{equation*}
\C_2 = \{ \Lambda \in \M_1( [ 0, 1 ] ) : \lambda_3 \leq 0.5 \text{ and } 0.3 \leq \lambda_4 \},
\end{equation*}
and that the measure-affine functional of interest is the exponential:
\begin{equation*}
F( \Lambda ) := \int_0^1 e^{ -r } \Lambda( dr ).
\end{equation*}
Then the finite dimensional subspace $\C_D \subset \C_2$ consists of discrete probability measures on $[0, 1]$ with at most three atoms:
\begin{equation*}
\C_D = \left\{ \Lambda \in \C_2 : \Lambda = \sum_{ k = 1 }^p w_k \delta_{ r_k } \text{ where } 1 \leq p \leq 3, w_k \geq 0, r_k \in [ 0, 1 ] \text{ and } \sum_{ k = 1 }^p w_k = 1 \right\}.
\end{equation*}
This yields three maximisation/minimisation problems, one corresponding to each number of atoms, though in practice only the largest needs to be solved since the two others can be recovered as special cases.
In this case, the constrained optimisation problem is
\begin{align*}
\text{Maximise/Minimise: } & a e^{ -x } + b e^{ - y } + ( 1 - a - b ) e^{ -z } \\
\text{Subject to: } & a x + b y + ( 1 - a - b ) z \leq 0.5 \\
&-a x^2 - b y^2 - ( 1 - a - b ) z^2 \leq -0.3 \\
&a, b \leq 1 \\
&-a, -b \leq 0 \\
&a + b \leq 1 \\
&x, y, z \leq 1 \\
&-x, -y, -z \leq 0.
\end{align*}
Numerical evaluation in Mathematica yields the bounds $F( \Lambda ) \in ( 0.620, 0.810 )$.
\end{ex}

\section{Simulation study}\label{simulation}

Efficient methods for approximating the $\Lambda$-coalescent likelihood pointwise exist \citep{Birkner11, Koskela15} and can be readily adapted to the form developed by \citet{Beaumont03} for time series data.
These likelihood estimators can then be used in the pseudo-marginal Metropolis-Hastings algorithm \citep{Beaumont03, Andrieu09}, in which the likelihood evaluations required in a standard Metropolis-Hastings algorithm are replaced with unbiased estimators.
The resulting algorithm still targets the correct posterior and inherits the efficient exploration of parameter space of MCMC methods.
Thus it is well-suited to high-dimensional situations with intractable likelihood.

Let $S( n )$ denote the space of completely monotonic sequences of length $n - 2$, and for $\bm{ \lambda } \in S( n )$ let $L( \bm{ \lambda }; \n )$ be the likelihood function and $\widehat{ L }( \bm{ \lambda }; \n )$ be an unbiased estimator.
We recall the pseudo-marginal Metropolis-Hastings algorithm in Algorithm \ref{MCMC} below.

\begin{algorithm}[!ht]
\caption{Pseudo-marginal Metropolis-Hastings for finite moment sequences}
\label{MCMC}
\begin{algorithmic}[1]
\REQUIRE Prior $P_n$, observation $\n$, transition kernel $K: S(n) \times S(n) \mapsto \R_+$ and $N \in \N$.
\STATE Initialise sample $S \gets \emptyset$ and moment sequence $\bm{ \lambda } \gets \bm{ \lambda }_0$. 
\STATE Compute likelihood estimator $\widehat{ L }( \bm{ \lambda }; \n )$.
\FOR{j = 1, \ldots, N}
\STATE Sample $\bm{ \lambda }' \sim K( \bm{ \lambda }, \cdot)$.
\STATE Compute likelihood estimator $\widehat{ L }( \bm{ \lambda }'; \n )$. 
\STATE Set $a \gets 1 \wedge \frac{ K( \bm{ \lambda }', \bm{ \lambda } ) \widehat{ L }( \bm{ \lambda }'; \n ) P_n( \bm{ \lambda }' ) }{ K( \bm{ \lambda }, \bm{ \lambda }' )\widehat{ L }( \bm{ \lambda }; \n ) P_n( \bm{ \lambda } ) }$. 
\STATE Sample $U \sim U(0,1)$.
\IF{$U < a$}
\STATE Set $S \gets S \cup \{ \bm{ \lambda }' \}$, $\widehat{ L }( \bm{ \lambda } ) \gets \widehat{ L }( \bm{ \lambda }' )$ and $\bm{ \lambda } \gets \bm{ \lambda }'$.
\ELSE
\STATE Set $S \gets S \cup \{ \bm{ \lambda } \}$.
\ENDIF
\ENDFOR
\RETURN $S$
\end{algorithmic}
\end{algorithm}

Algorithm \ref{MCMC} returns a sample of moment sequences $S$, whose limiting distribution is the posterior.
A credible region $C$ can be approximated from MCMC output, and used to form $\C_m$ as per \eqref{confidence}.
Measure-affine quantities of interest can then be maximised or minimised using finite computation by making use of Lemma \ref{optimisation}.

By way of demonstration we perform a simulation study on simulated data, focusing on assessing the Kingman hypothesis, $\Lambda = \delta_0$, which can be robustly evaluated based upon whether or not $\lambda_3 = 0$.
The type space consists of 10 binary loci, or $2^{10}$ types, with mutations flipping a uniformly chosen locus, i.e.~the matrix $M$ has 10 identical, non-zero entries in each row corresponding to flipping one locus each.
The total mutation rate is set at $\theta = 0.1$.
Samples of 20 lineages were simulated at each of five time points from both the Kingman ($\lambda_3 = 0$) and Bolthausen-Sznitman ($\Lambda = U(0, 1)$, $\lambda_3 = 0.5$) coalescents.
The data sets are summarised in Table \ref{data}.
Both data sets come from independent simulations, and are sampled from a population at stationarity.
The Kingman coalescent is a classical model of genetic ancestry, while the Bolthausen-Sznitman coalescent has recently been suggested as a ancestral model for influenza and HIV \citep{Neher13}.

\begin{table}[!ht]
\centering
\begin{tabular}{c|r|r}
Time & \multicolumn{1}{c|}{Bolthausen-Sznitman} & \multicolumn{1}{c}{Kingman} \\ \hline
0.0 & 20 x 1001001111 & 20 x 0010000000 \\ \hline
0.5 & 19 x 1001001111 & 15 x 0010000000 \\
& 1 x 1101001111 &  5 x 0000000000 \\ \hline
1.0 & 20 x 1001001111 & 8 x 0010000000 \\
& & 6 x 0000000000 \\
& & 6 x 0010001000 \\ \hline
1.5 & 19 x 1001001111 & 10 x 0010000000 \\
& 1 x 1001101111 & 6 x 0000000000 \\
& & 4 x 0010001000 \\ \hline
2.0 & 19 x 1001001111 & 16 x 0010000000 \\
& 1 x 1001001110 & 4 x 0010001000
\end{tabular}
\caption{Simulated observed sequences from the two models with $\theta = 0.1$ and $M$ as described in the paragraph above.}
\label{data}
\end{table}

We set $\eta = 10^{-6}$ and specify the prior for the density of $\Lambda$ on $[\eta, 1]$ as a Dirichlet process mixture model of truncated Gaussian kernels.
The base measure is the uniform measure on $[\eta, 1]$, with total mass scaled to 0.1.
Finally, the prior for $\tau^{ -1/2 }$, the standard deviations of the truncated Gaussian kernels was chosen to be the $\operatorname{Beta}(1.0, 3.0)$ distribution on $[0, 1]$.
Truncating the maximal standard deviation at 1 excludes some very flat densities from the support of the prior, but the standard normal density is already very flat across $[\eta, 1]$ and the truncation was found to yield substantial gains in speed of convergence of algorithms.
Note that neither data generating model lies in the support of this prior, but both can be well approximated by members of the support.
The choice of hyperparameters was made because it yields a relatively flat marginal prior for $\lambda_3$, the quantity of interest (c.f.~Figure \ref{long}), and the prior satisfies the requirements of the consistency result in Theorem \ref{ts_consistency}.

We make use of the Sethumaran stick-breaking construction of the Dirichlet process \citep{Sethuraman94} and truncate our prior after the first four atoms.
For our choice of base measure and concentration parameter this results in a total variation truncation error of order $400 e^{ -30 } \approx 3.7 \times 10^{-11}$ \citep[Theorem 2]{Ishwaran01}.
Any truncation error could be avoided by pushing forward the prior directly onto the space of moment sequences as illustrated in Section \ref{priors}.
The cost is a more computationally expensive prior to sample and evaluate, as well as a higher dimensional parameter space consisting of 98 moments for our data sets. 
We do not investigate this strategy further in this paper.

The four atom truncation results in 11 parameters: four locations and standard deviations of truncated Gaussian kernels, and three stick break points. 
The fourth break point is set to fulfil the constraint of the weights summing to 1.
We propose updates to these parameters using a truncated Gaussian random walk on $[\eta, 1]^4 \times [0, 1]^4 \times [0, 1]^3$ with covariance matrix $0.0025 \operatorname{Id}$.
This scaling was found to result in a reasonable balance of acceptance probability and jump size for the first moment $\lambda_3$.

We approximate the likelihoods required for computing the acceptance probability $a$ using a straightforward adaptation of the optimised importance sampling method of \citet{Koskela15} to the time series setting of \citep{Beaumont03}, but need to specify the number of particles to use.
More particles will result in more accurate approximations, but at greater computational cost.
In \citep{Doucet15} the authors show that tuning the variance of the log likelihood estimator to 1.44 results in efficient algorithms under a wide range of assumptions.
Preliminary simulations showed this was achieved in our setting by choosing 75 particles for Bolthausen-Sznitman data, and 180 particles for Kingman data.
\vskip 11pt
\begin{rmk}
In the context of real data, when the true data generating parameters are not known, optimising the number of particles using trial runs may require an infeasible amount of computation.
In practice, adaptive algorithms, which optimise parameters online, can be used to circumvent this problem.
Recent work by \citet{Sherlock15} has shown how to implement such adaptivity to all four algorithms discussed below, while maintaining ergodicity and the correct stationary distribution of the algorithm.
\end{rmk}

It is well known that the standard, exact pseudo-marginal algorithm suffers from ``sticking" behaviour, where an unusually high likelihood estimator prevents the algorithm from moving for a macroscopic number of steps \citep{Andrieu09}.
The usual solution is to use a noisy version of the algorithm, in which the likelihood estimator is recomputed at each stage. 
This doubles the number of required likelihood evaluations and biases the algorithm into an incorrect stationary distribution, but can greatly reduce the variance of estimates.
We compare both the exact and noisy versions of the pseudo-marginal algorithm in Figure \ref{traces}.
We also investigate the effect of delayed acceptance acceleration \citep{Christen05}, in which proposed moves are first subjected to an accept-reject decision based on an approximate likelihood function that is cheap to compute.
Only samples which are accepted at this first stage are subjected to an accept-reject decision based on the full likelihood estimates, or more specifically a slight modification to ensure that the delayed acceptance mechanism does not affect the stationary distribution of the algorithm.
In the $\Lambda$-coalescent setting approximate likelihoods are readily available in the form of Product of Approximate Conditionals (or PAC) methods \citep[Section 4.3]{Koskela15}, which we use to implement delayed acceptance chains.

Figure \ref{traces} shows trace plots of 20 000 steps from the four algorithms introduced above.
The exact pseudo-marginal algorithm exhibits sticking behaviour as might be expected, but it is surprising to see that the noisy algorithm does not completely eliminate it.
We conjecture that the remaining stickiness in the noisy trace plot is due to multiple, narrow modes in the 11 dimensional posterior.
It is also clear that the bias in the noisy algorithm is confounding the signal in the data, as the traces are much more intermixed than those of the exact algorithm.

Both the noisy and exact pseudo-marginal algorithm are very computationally expensive to run, particularly for the Kingman data set due to the larger number of particles used to estimate likelihoods.
Delayed acceptance acceleration reduces these run times as expected, particularly for the Kingman case.
Both delayed acceptance algorithms also suffer from sticking, and show less clear separation of the traces than the exact algorithm.
They also look very similar to each other.

\begin{figure}[!ht]
\centering
\includegraphics[width = 0.49 \linewidth]{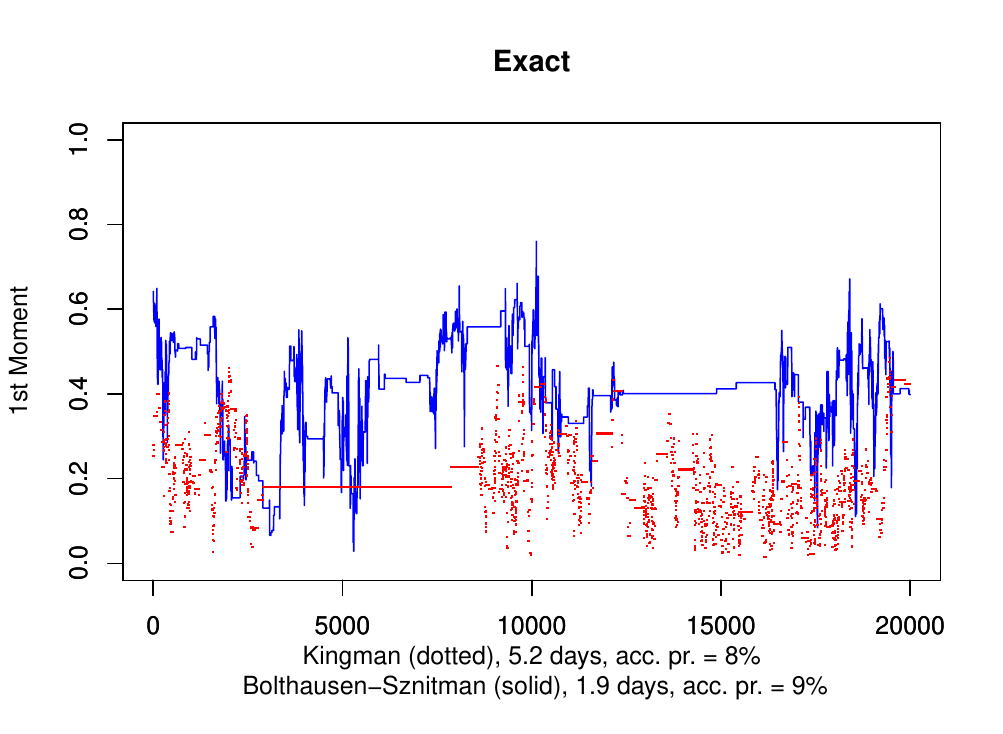}
\includegraphics[width = 0.49 \linewidth]{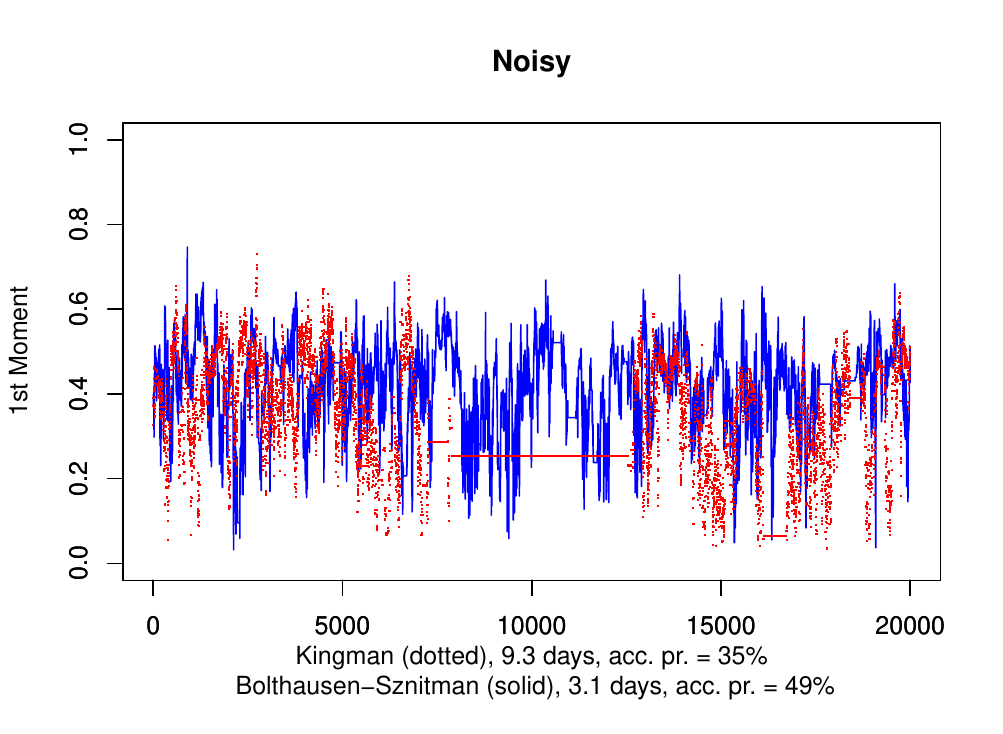}
\includegraphics[width = 0.49 \linewidth]{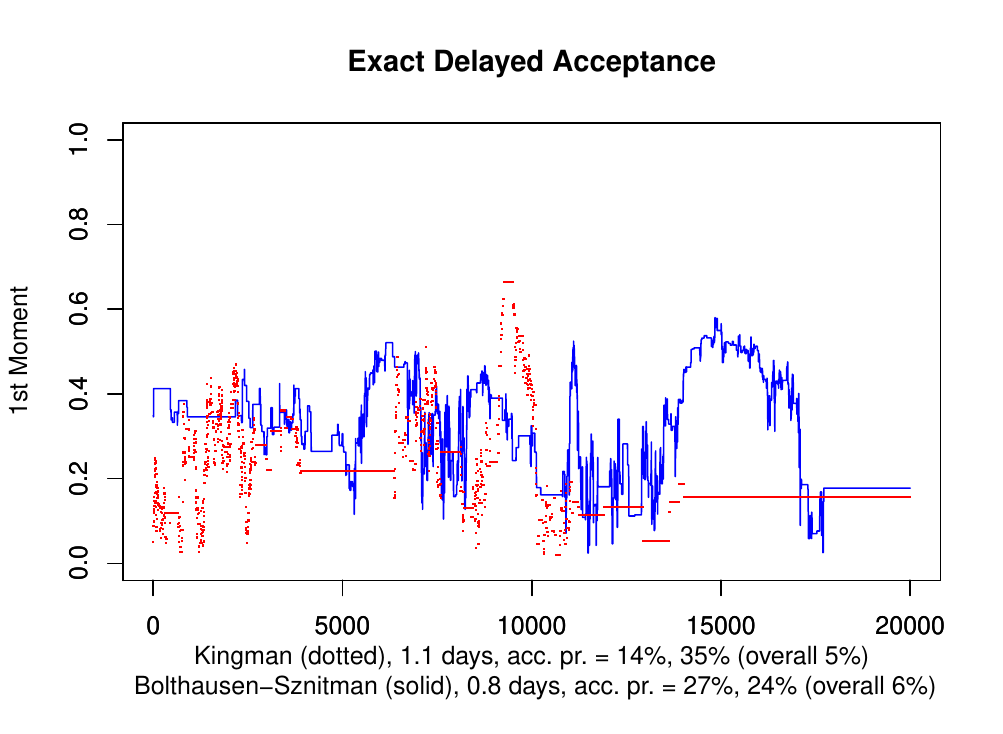}
\includegraphics[width = 0.49 \linewidth]{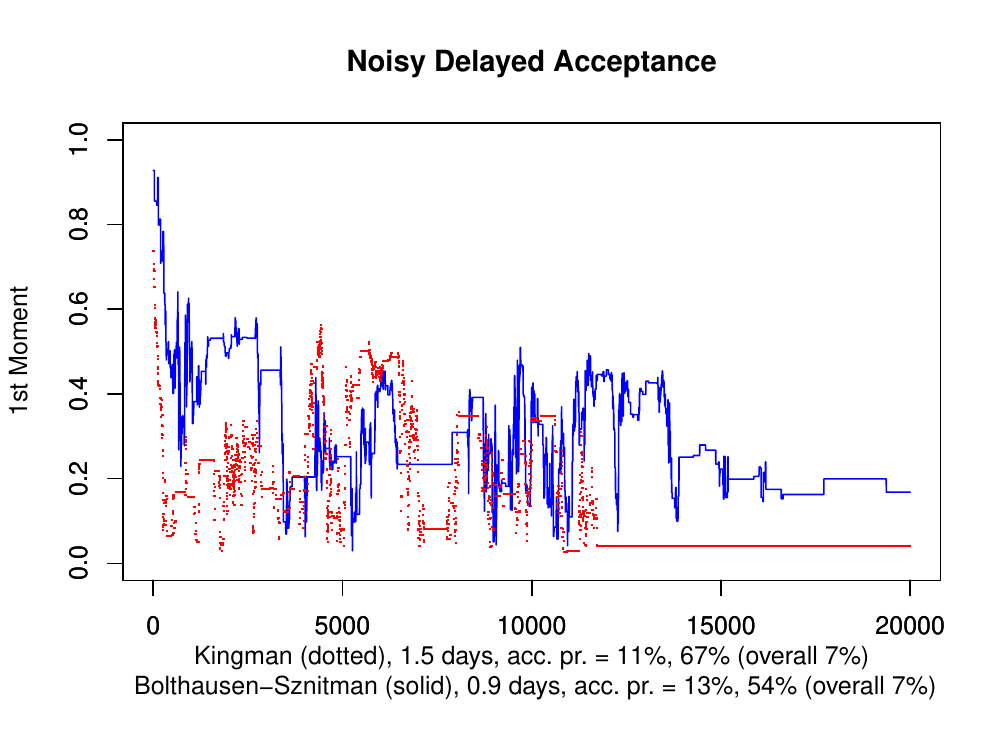}
\caption{Trace plot of the pseudo-marginal algorithm (top left), the noisy algorithm (top right) and corresponding delayed acceptance algorithms (bottom row). Also shown are computation times (on a mid-range Toshiba laptop with an Intel i5 processor) and acceptance probabilities. Delayed acceptance runs show acceptance probabilities for both stages, as well as an overall probability. All runs are independent and initialised from the prior.}
\label{traces}
\end{figure}

Since it appears to be difficult to eliminate sticking behaviour in this case, we chose to leverage the speed up obtained by making use of delayed acceptance and ran a further exact, delayed acceptance pseudo-marginal algorithm for 200 000 steps. 
A trace plot is shown in Figure \ref{long}.
Sticking behaviour is still present, but on a much shorter scale relative to the run length.
Run times are comparable to the noisy algorithm without delayed acceptance, and the Bolthausen-Sznitman trace is again clearly centred at a higher level than the Kingman trace.
We thinned the output of this long run by a factor of 4 000 to reduce the effect of sticking points to obtain 50 samples of first moments, which were used to plot the histograms in Figure \ref{long}.

\begin{figure}[!ht]
\centering
\includegraphics[width = 0.49 \linewidth]{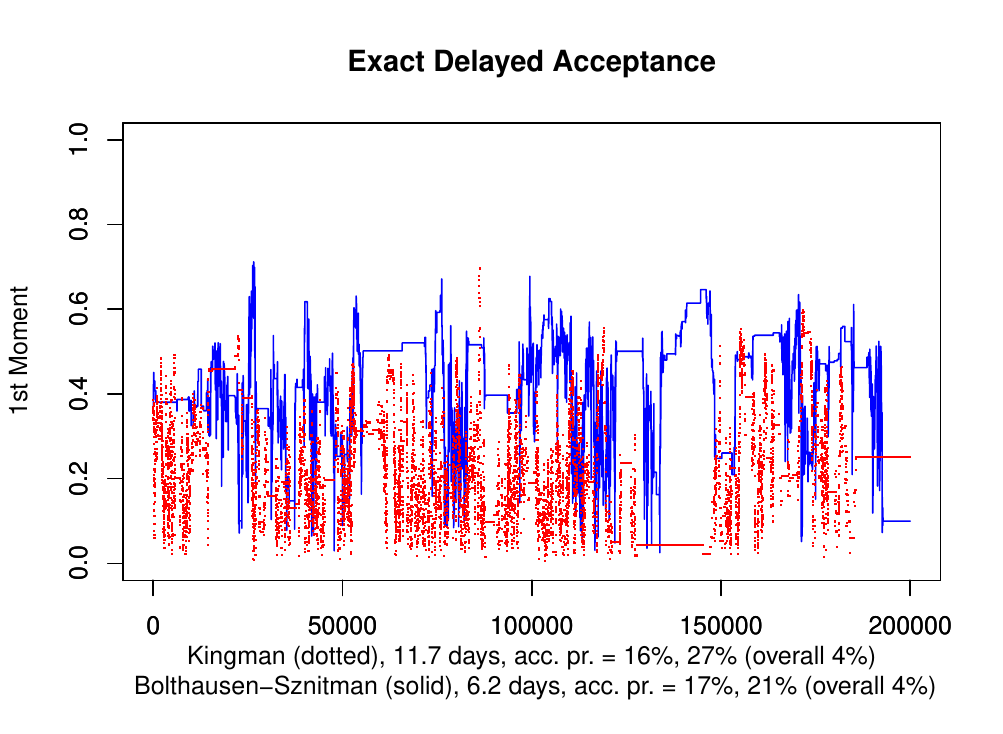}
\includegraphics[width = 0.49 \linewidth]{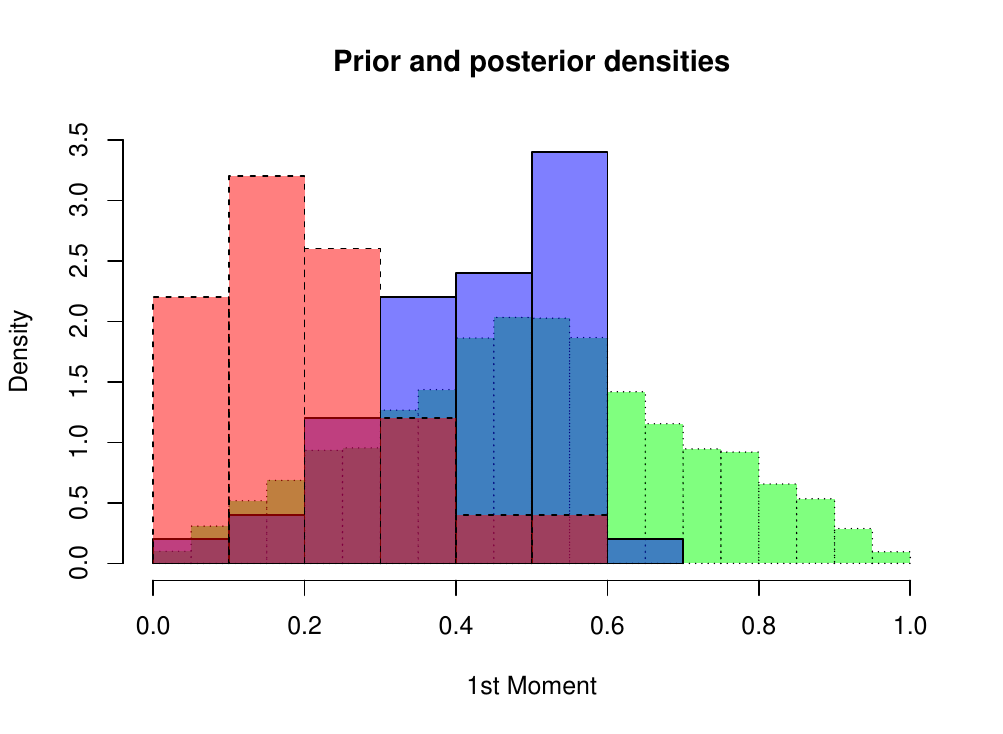}
\caption{(Left) Trace plot of the delayed acceptance exact pseudo-marginal algorithm, along with computation times (on a mid-range Toshiba laptop with an Intel i5 processor) and acceptance probabilities for both stages as well as overall. Both runs are independent and initialised from the prior. (Right) Histograms of both Kingman (dashed) and Bolthausen-Sznitman (solid) posteriors estimated from 50 MCMC samples obtained by thinning the runs shown on the left. The estimated prior density is shown (dotted), based on 10 000 independent samples from the prior.}
\label{long}
\end{figure}

It is clear from both the trace plots and histograms in Figure \ref{long} that the run length is still not sufficient for fully converged estimates.
However, both plots already  show a clear shift of posterior modes toward the values generating the data.
The red histogram is  consistent with the Kingman coalescent, while the blue one is consistent with the Bolthausen-Sznitman coalescent.
Moreover, approximate 95\% credible intervals are $\lambda_3 \in [0.1, 0.6]$ for the Bolthausen-Sznitman posterior, and $\lambda_3 \in [\eta, 0.5]$ for the Kingman posterior.
This suggests the relatively short time series is nevertheless sufficiently informative to reject the incorrect model in both cases.

\section{Discussion}\label{discussion}

In this paper we have presented a robust framework for Bayesian non-parametric inference under $\Lambda$-coalescent processes for time series data, and studied the feasibility of implementable families of algorithms for practical inference.
We demonstrated that time series data is necessary for consistent inference, and gave verifiable conditions for posterior consistenct.
As seen in Example \ref{ex_inconsistency}, lack of consistency can lead to very low statistical power and high sensitivity of inference both to confounding parameters, such as mutation rate, and the observed allele frequencies.
A theoretical guarantee of consistency is crucial as expressions for statistical power rely on intractable stationary distributions and transition densities of $\Lambda$-Fleming-Viot jump-diffusions, making the reliability of experiments without time series data very difficult to evaluate.

Efficient methods for importance sampling $\Lambda$-coalescent trees are available \citep{Birkner11, Koskela15}, and these can be used to generalise the pseudo-marginal MCMC algorithms of \citep{Beaumont03} for temporally spaced data.
The consistency conditions of Theorem \ref{ts_consistency} on the prior are sufficiently mild to permit the use of Dirichlet process mixture model priors, which can be readily truncated for implementable algorithms.
Alternatively, we have shown that parametrising the inference problem via truncated moment sequences leads to implementable algorithms with no discretisation or truncation error.
This work provides a strong indication that time series data, and accompanying inference methods such as the one outlined above, should be adopted as standard whenever the coalescent generating the data cannot be assumed to be known.

Generalising of consistency result within the $\Lambda$-Fleming-Viot process class to include unknown drift, which can be used to model e.g.~mutation, recombination and selection, as well as more general $\Lambda$-measures is of great interest.
However, it is difficult for a number of reasons.
Firstly, relaxing conditions on $\Lambda$ near 0 while ensuring the integral in \eqref{kl_cond} remains finite is challenging.
Likewise, it is well known that equivalent changes of measure for L\'evy processes necessitate equivalent L\'evy measures (see e.g.~\citep{Sato99}, Theorem 33.1), and this is also the condition needed for the jump-diffusions considered in \citep{Cheridito05}.
The way in which the drift can be transformed while maintaining absolute continuity in \citep{Cheridito05} is also restrictive, and depends on the diffusion coefficient and L\'evy compensator.
Finally, any difference in diffusion coefficients will obviously destroy absolute continuity outright, so if there were an atom $\Lambda( \{ 0 \} ) > 0$, its size would have to be known with certainty.

It would also be of great interest to obtain contraction rates of the posterior under verifiable conditions. 
Obtaining rates is a challenging problem in non-i.i.d.~Bayesian non-parametric inference, and existing results by \citet{Gugushvili15} for compound Poisson processes and \citet{Nickl15} for scalar diffusions do not seem generalisable.
A different approach by \citet{Nguyen13} for mixing measures of infinite mixture models could present a promising directions of future work by viewing the $\Lambda$-coalescent tree as a mixture of merger events, but adaptation into the present setting is a formidable task and is beyond the scope of this paper.

The method of parametrising the unknown $\Lambda$-measure by its first $n - 2$ moments when the data set is of size $n \in \N$ reflects the limited amount of signal in finite data.
More precisely, the likelihood given a sample of size $n \in \N$ is constant within moment classes of order $n$, so that any variation in the posterior within these moment classes is due solely to the prior.
Hence this parametrisation can be seen as regularising an under-determined inference problem in an infinite dimensional space by identifying an appropriate, data-driven, finite dimensional quotient space in which to conduct inference.
We believe this approach to have more broad applicability in non-parametric statistics as well as an alternative to direct regularisation by a prior in the infinite dimensional space, or to approximate projections onto finite dimensional subspaces \citep{Cui14}.

The algorithms used to approximate the posterior and maximise/minimise quantities of interest given the posterior are highly computationally intensive, and we do not expect our approach to be competitive with well-chosen parametric families when the number of observed lineages or loci is large.
However, the simulations in Section \ref{simulation} demonstrate that our assumption-free framework can be used to empirically evaluate the modelling fit of parametric families given moderately sized pilot data, for instance by ensuring that the family contains a candidate $\Lambda$ which matches the MAP estimators of some small number of moments.
Such parametric families can then be confidently used to process larger data sets.
The pseudo-marginal method can also be adapted to incorporate unknown mutation parameters, recombination and other forces not considered in this paper, albeit at the cost of greater computational cost and lower parameter identifiability.
This cost can be alleviated to a large extent by modern GPU and cluster computing approaches, because the importance sampling algorithm used to estimate likelihoods is readily parallelisable.
For example, up to 500 fold speed up was reported by \citet{Lee10} when computations were parallelised on GPUs instead of being run in serial on CPUs.
Such gains in computation speed would make the algorithms employed in Section \ref{simulation} practical for many realistic genetic data sets.

\section*{Acknowledgements}

The authors are grateful to Tim Sullivan for insight into orthogonal polynomials as well as Bayesian brittleness, to Felipe Medina Aguayo for fruitful conversations about pseudo-marginal methods, to Yun Song for helpful comments on Remark \ref{yuns_result}, and to Matthias Birkner for assistance with identifiability conditions.
Jere Koskela was supported by EPSRC as part of the MASDOC DTC at the University of Warwick. Grant No. EP/HO23364/1. 
Paul Jenkins is supported in part by EPSRC grant EP/L018497/1.

\bibliography{science}  
\bibliographystyle{plainnat}

\end{document}